\newtheorem{lemma}{Lemma}
\newtheorem{proposition}{Proposition}
\newtheorem{theorem}{Theorem}
\theoremstyle{remark}
\newtheorem{remark}{Remark}
\def\Affiliation{QUIT group, Dipartimento di Fisica, Università di Pavia, and INFN Sezione di Pavia, via Bassi 6, 27100 Pavia, Italy}
\newcommand\spc[1]{\mathcal{#1}}
\def\sH{\spc H}
\def\Psucc{\ensuremath{\mathcal{P}_{\text{s}}}}
\def\SEPSet{\ensuremath{\mathsf{SEP}}}
\def\LOCCSet{\ensuremath{\mathsf{LOCC}}}
\def\FockSpace{\ensuremath{\mathcal{F}}}
\newcommand{\X}[1]{\abs{\braket{\tilde{\psi}_{#1} | \tilde{\phi}_{#1}}}}
\newcommand{\Y}[1]{\abs{\braket{{\psi}_{#1} | {\phi}_{#1}}}}
\DeclareMathOperator{\St}{St}
\DeclareMathOperator{\Tr}{Tr}
\DeclareMathOperator{\Supp}{Supp}
\DeclarePairedDelimiter\abs{\lvert}{\rvert}
\DeclarePairedDelimiter\norm{\lVert}{\rVert}
\newcommand{\orcid}[1]{\href{#1}{\includegraphics[height=.8em]{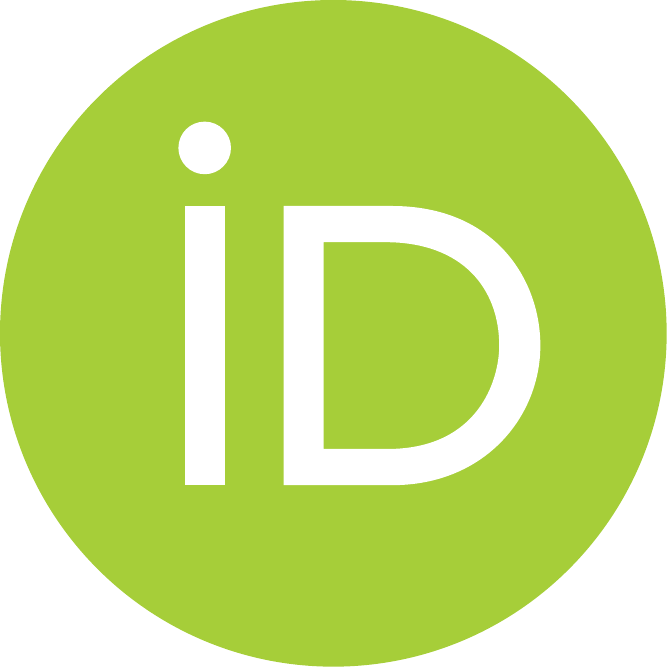}}}
\begin{document}
	\title{Unambiguous discrimination of Fermionic states through local operations and classical communication}
	
	\author{Matteo \surname{Lugli} \orcid{https://orcid.org/0000-0003-0554-3760}}
	\email{matteo.lugli01@ateneopv.it}
	\affiliation{\Affiliation}
	
	\author{Paolo \surname{Perinotti} \orcid{https://orcid.org/0000-0003-4825-4264}}
	\email{paolo.perinotti@unipv.it}
	\affiliation{\Affiliation}
	
	\author{Alessandro \surname{Tosini} \orcid{https://orcid.org/0000-0001-8599-4427}}
	\email{alessandro.tosini@unipv.it}
	\affiliation{\Affiliation}
	
	\begin{abstract}
		\acrodef{LOCC}{local operations and classical communication}
		\acrodef{POVM}{positive-operator valued measure}
		\acrodef{SEP}{separable effects}\acused{SEP}
		The paper studies unambiguous discrimination of Fermionic states
		through \ac{LOCC}.
		In the task of unambiguous discrimination, no error
		is tolerated but an inconclusive result is allowed.
		We show that contrary to the quantum case, it is not always possible to
		distinguish two Fermionic states through \ac{LOCC} unambiguously
		with the same success probability as if global measurements were
		allowed. Furthermore, we prove that we can overcome such a limit
		through an ancillary system made of two Fermionic modes,
		independently of the dimension of the system, prepared in a
		maximally entangled state: in this case, \ac{LOCC} protocols
		achieve the optimal success probability.
	\end{abstract}
	
	\maketitle
	
	\section{Introduction}	
	The quest for state discrimination has been thoroughly investigated in the quantum realm \cite{Helstrom1969,Helstrom1976,Ivanovic1987,Dieks1988,Peres1988,Chefles2000}.
	When dealing with composite systems, the peculiar nature of entanglement, which quantum systems can exhibit, gives rise to counterintuitive situations where the information is encoded
	in a delocalized fashion, contradicting the common wisdom according to which any 
	information carried by a system should be encoded in its own local degrees of freedom. 
	As a consequence, in order to discriminate a full basis of entangled pure states, a delocalized measurement is required.
	However, when processing distributed information, a reliable channel for the exchange of quantum systems is an expensive resource and one is thus willing to accept lower performances if it is possible to achieve a reasonable result using only \acf{LOCC} (for a 
	comprehensive reference on \ac{LOCC} protocols, see Ref.~\cite{Chitambar2014}).
	
	For this reason, discrimination of two pure states via \ac{LOCC} is twice  counterintuitive: indeed, classic results in quantum information theory proved that \ac{LOCC} discrimination of pure states can be as good as the optimal one, both in a minimum-error scenario~\cite{Walgate2000,Virmani2001} and in a zero-error scenario, where one accepts an inconclusive outcome~\cite{Ji2005}.
	
	In the present work, we study the problem of optimal unambiguous discrimination of pure states via \ac{LOCC} in Fermionic theory.
	This theory describes systems, states, measurements, and transformations on Fermions, or, more precisely, on Fermionic modes.
	The interest in Fermionic theory is clearly due to the fact that elementary matter fields in quantum physics are actually collections of Fermionic modes.
	While it is well known that from a computational
	point of view, Fermionic modes are equivalent to
	qubits~\cite{Bravyi2002}, to the extent that quantum algorithms have
	been devised for the simulation of scattering processes involving
	Fermions~\cite{Jordan2014}, still the differences between the two
	theories are relevant from the point of view of the complexity of
	geometric structures required for such simulation: indeed, there is no
	local encoding of Fermionic modes in qubit networks, or
	vice versa~\cite{Wolf2006,Banuls2007,DAriano2014a}.
	As a matter of fact, then, while the computational equivalence of qubits and Fermions holds asymptotically, modulo a polynomial overhead of program complexity, for finite-size problems it is relevant to study simple information-processing tasks with Fermionic modes autonomously. Among many options for the description of Fermionic modes through a suitable
	choice of algebraic structures~\cite{Friis2013,DAriano2014}, it is
	often convenient to resort to the use of the Jordan-Wigner
	isomorphism~\cite{Jordan1928,Verstraete2005,Pineda2010}, which allows
	one to map the algebra of $N$ canonically anticommuting field
	operators into strings of Pauli operators acting on the Hilbert space
	of $N$ qubits, with a parity superselection
	rule~\cite{Kitaev2004,Schuch2004,Schuch2004a}.
	
	When dealing with state discrimination, where all the relevant formulas come from the application of the Born rule for the calculation of probabilities, the Jordan-Wigner representation is particularly convenient, and the only care that must be taken in
	representing Fermions with qubits is an appropriate account for parity
	superselection~\cite{DAriano2014}, as shown in
	Ref.~\cite{Lugli2020} where the task of optimal minimum-error
	discrimination was studied.
	
	Here we take a similar approach to the problem of optimal unambiguous state 
	discrimination. One of the key features of \ac{LOCC} discrimination is that unlike the case of binary quantum state discrimination, ancilla-assisted 
	protocols~\cite{Jonathan1999} prove to be useful. More precisely, in the minimum-error
	scenario, one entangled pair is sufficient to make optimal \ac{LOCC} discrimination 
	equivalent to its unconstrained counterpart.
	
	The present work extends that of Ref.~\cite{Lugli2020}, which
	deals with both perfect and optimal conclusive discrimination of
	Fermionic states.  In order to distinguish between two quantum
	non orthogonal preparations, the further strategy of unambiguous
	discrimination has been proposed by~\citeauthor{Jaeger1995}
	in Ref.~\cite{Jaeger1995}: we require the outcomes to flawlessly
	determine the true state.
	However, due to the essential nature of
	quantum theory, the whole protocol must make allowance for a third
	inconclusive result, for which no intelligence is attained from the
	measurement. In Refs.~\cite{Chen2001,Chen2002,Ji2005}, the authors
	prove that restricting ourselves to \ac{LOCC} protocols is no real
	limit to the performances of unambiguous discrimination.  We briefly
	discuss the protocol in the Fermionic realm and prove that Fermionic
	local discrimination is typically \textsl{suboptimal} with respect to
	the unconstrained one, contrary to the quantum case.  The necessary
	and sufficient conditions on the states are derived for achieving
	optimal local discrimination.  Moreover, we prove that \ac{LOCC}
	protocols accomplish the same discrimination performances as the
	unconstrained ones, once we provide an ancillary system in a maximally
	entangled state.

	\section{Preliminary notions}
	
	\subsection{Unambiguous quantum states discrimination}\label{sec:quanutm-disc}
	We are interested in discriminating between two non-normalized and
	pure states $\rho = p\ket\psi\bra\psi$ and
	$\sigma = q\ket\phi\bra\phi$, where $\{p, q\}$ is the prior
	probability distribution and $\ket\psi$, $\ket\phi$ are normalized
	vectors.
	In quantum theory, we know that if the two states are orthogonal, we may perfectly distinguish between the two.
	If we release such a condition, on the contrary, the discrimination protocol becomes probabilistic and subject to errors.
	We remind the reader that a quantum measurement is generally represented by a \ac{POVM}, namely, a collection of positive operators $0 \le S \le I$---called \emph{effects}---that sums to the identity operator $I$.
	Every operator $\Pi_i$ in a \ac{POVM} is
	associated with a possible outcome $i$, and the probability of outcome
	$i$, provided that the measurement is performed on a system prepared
	in state $\rho$, is given by the Born rule
	\begin{align}
		\Pr(i|\rho)=\Tr[\rho\Pi_i].
	\end{align}
	
	If there exist operators $0 \le A_i, B_i \le I$ such that
	$S = \sum_i A_i \otimes B_i$, we call the effect $S$ \emph{separable}.
	If a \ac{POVM} is made of separable effects, then we call it separable as well, and denote with \ac{SEP} the set of separable \acp{POVM}.
	Moreover, we recall that \ac{LOCC} measurements are a proper subset of \ac{SEP} \acp{POVM}~\cite{Bennett1999} (for further details on \ac{LOCC} protocols, see Ref.~\cite{Chitambar2014}).
	
	In the following, we deal with discrimination strategies that unambiguously distinguish between the two provided states $\rho$ and $\sigma$, though allowing for an inconclusive result.
	We describe the measurement through the 
	\ac{POVM} made of $\Pi_\psi$, $\Pi_\phi$, and $\Pi_?$, where each operator corresponds to one of the three possible outcomes.
	The requirement for a \ac{POVM} to represent an unambiguous discrimination protocol is given by the following conditions:
	\begin{align}
		\label{eq:unambiguous_condition}
		\left\{
		\begin{aligned}
			&\Pr(\phi|\psi)=\Tr[\ket\psi\bra\psi \Pi_\phi] =0, \\
			&\Pr(\psi|\phi)=\Tr[\ket\phi\bra\phi \Pi_\psi] = 0,
		\end{aligned}
		\right.
	\end{align}	
	and $0\leq\Pi_\psi+\Pi_\phi\leq I$. Under these circumstances, we define $\Pi_? \coloneqq I - \Pi_\psi - \Pi_\phi$.
	The success probability of the protocol is then given by
	\begin{equation*}
		\Psucc \coloneqq \Tr[(p \ket\psi\bra\psi + q \ket\phi\bra\phi) (\Pi_\psi + \Pi_\phi)]
	\end{equation*}
	or $\Psucc = 1 - \mathcal{P}_{\text{err}}$, where $\mathcal{P}_{\text{err}} \coloneqq \Tr[(p \ket\psi\bra\psi + q \ket\phi\bra\phi) \Pi_?]$ is the error probability.
	
	In Ref.~\cite{Jaeger1995}, the authors describe the optimal \ac{POVM} that maximizes the probability of discrimination success $\Psucc$ for the provided states $\rho$, $\sigma$.
	Due to the effects being dominated by the identity, the set of optimal \acp{POVM} splits into two classes depending on the relationship between the scalar product $\braket{\psi|\phi}$ of the two preparations and the relevant quantity of
	\begin{equation}\label{eq:Xi}
		\Xi(p, q) \coloneqq \sqrt{\frac{\min\{p, q\}}{\max\{p, q\}}} , \quad p, q > 0 .
	\end{equation} 
	Indeed, the optimal \ac{POVM} achieves
	\begin{align}
		\Psucc (\rho, \sigma) &= 1 - 2\sqrt{p q} \abs{\braket{\psi|\phi}} \label{eq:psucc_ternary} \\
		\intertext{for $\abs{\braket{\psi|\phi}} \le \Xi(p, q)$, whereas}
		\Psucc (\rho, \sigma) &= \max\{p, q\} \left(1 - \abs{\braket{\psi|\phi}}^2\right) \label{eq:psucc_binary}
	\end{align}
	otherwise.
	In the latter case, the probabilities are so unbalanced that it is convenient to renounce detection of the least probable state, by letting its \ac{POVM} operator be null.
	Henceforward, we will denote such a protocol as \textsl{binary} since the measurement consists of only two
	effects. Consistently, the former optimal strategy will be
	deemed \textsl{ternary}.  One may prove with ease that the two
	success probabilities of Eqs.~\eqref{eq:psucc_ternary} and
	\eqref{eq:psucc_binary} satisfy the following inequality:
	\begin{equation}\label{eq:binary_versus_ternary}
		\max\{p, q\} \left(1 - \abs{\braket{\psi|\phi}}^2\right) \le 1 - 2\sqrt{p q} \abs{\braket{\psi|\phi}} ,
	\end{equation}
	where equality is achieved iff $\abs{\braket{\psi|\phi}} = \Xi(p, q)$.
	
	Here we focus on states representing preparations of a bipartite system $\mathrm{AB}$ shared between Alice and Bob.
	In Ref.~\cite{Ji2005}, the authors show that the optimal success probability, given by  
	Eq.~\eqref{eq:psucc_ternary} or \eqref{eq:psucc_binary}, depending on the circumstances, can be achieved in the bipartite scenario by an \ac{LOCC} protocol.
	Namely, Alice has to carry out a measurement on her party in a suitably chosen orthonormal basis, and then send the outcome through a classical channel to Bob.
	At this stage, Bob either perfectly or unambiguously discriminates between two local states and estimates the correct result.
	
	\subsection{Fermionic quantum theory}
	The Fermionic quantum theory describes states and transformations of local Fermionic modes (see Refs.~\cite{Bravyi2002,Wolf2006,Banuls2007,Friis2013,DAriano2014}) that satisfy the parity \emph{superselection rule}~\cite{Wick1952,Hegerfeldt1968,Schuch2004,Kitaev2004,Schuch2004a,DAriano2014a,Bravyi2002}.
	A Fermionic
	mode can be either empty or ``excited'' and vectors
	representing Fermionic systems are allowed to be superimposed
	only if they exhibit the same parity, namely, the excitation
	numbers are all either even or odd.  Such a rule is equivalent
	to requiring that local Fermionic transformations are described
	by Kraus operators belonging to the Fermionic algebra
	$\mathfrak{F}$.  The generators of $\mathfrak{F}$ are the
	operators $\varphi_i$, for $i$ from 1 to $N$ number of
	modes, fulfilling the canonical anticommutation relations
	$\{\varphi_i, \varphi_j^\dagger\} = \delta_{ij}$ and
	$\{\varphi_i, \varphi_j\} =
	\{\varphi_i^\dagger,\varphi_j^\dagger\} = 0,\ \forall i, j$.
	The Fermionic operators enable us to construct the Fock states
	as
	$\ket{n_1\ldots n_N} \coloneqq (\varphi_1^\dagger)^{n_1}
	\cdots (\varphi_N^\dagger)^{n_N} \ket{\Omega}$,
	where the vacuum state $\ket{\Omega}$ is the common
	eigenvector of operators $\varphi^\dag_i\varphi_i$ with null
	eigenvalues, and $n_i$ corresponds to the occupation
	number at the $i$th mode, i.e., the expectation value of the
	operator $\varphi^\dag_i\varphi_i$. The linear span of all
	Fock states corresponds to the antisymmetric Fock space
	$\FockSpace$.  We may denote by $\FockSpace_e$ and
	$\FockSpace_o$ those sectors of the Fock space featuring even
	and odd parity, respectively, with
	$\FockSpace=\FockSpace_e\oplus\FockSpace_o$. Every state (or
	effect) has a well-defined parity, i.e., states (and effects)
	satisfy the parity superselection rule.
	
	Thanks to the Jordan-Wigner isomorphism~\cite{Jordan1928,Verstraete2005,Pineda2010}, the Fock space
	$\FockSpace$ of $N$ Fermionic modes is isomorphic to a
	$N$-qubit Hilbert space, by the trivial identification of the
	Fermionic occupation number basis $\ket{n_1\ldots n_N}$ with
	the qubit computational basis (eigenvectors of the Pauli
	matrices $\sigma_z$ with $1\leq i \leq N$).  The two parities of
	the set of states (and effects) are actually isomorphic to the
	($N - 1$)-qubit states set, with even and odd pure states
	given by the rank-one projectors $\ket{\psi}\bra{\psi}$, with
	the $\ket{\psi}$ normalized superposition of Fock vectors
	belonging to $\FockSpace_e$ and $\FockSpace_o$, respectively.
	
	Hereafter, we take advantage of the Jordan-Wigner isomorphism to handle
	transformations and informational protocols in the Fermionic
	theory.  In fact, the isomorphism, which maps nonlocally, the
	Fermionic operator algebra to an algebra of transformations on
	qubits, allows us to proceed with the usual quantum
	notation. In this paper, we will focus on Fermionic state
	discrimination via \ac{LOCC}; it is therefore fundamental
	to characterize \ac{LOCC} for Fermionic systems. In this
	respect, the Jordan-Wigner transformation plays a major
	role. Indeed, in Ref.~\cite{DAriano2014}, the authors show that
	every Fermionic \ac{LOCC} corresponds to a quantum
	\ac{LOCC} on qubits by means of the Jordan-Wigner transformation $\mathcal{J}$.
	Accordingly, in the study of \ac{LOCC} Fermionic state discrimination, we will use the fact that in a fixed parity sector (even or odd), the results of quantum theory immediately hold in the Fermionic case also.
	
	\section{Fermionic states discrimination}
	Let us consider two states representing preparations of a Fermionic system.
	The theoretical result of Sec.~\ref{sec:quanutm-disc}, i.e., optimal quantum unambiguous discrimination, can be applied in the Fermionic case as well and this allows us to plainly reuse Eqs.~\eqref{eq:psucc_ternary} and \eqref{eq:psucc_binary} in the Fermionic realm too.
	
	Henceforth, we focus on the optimal discrimination strategies to distinguish two states $\rho = p\ket\psi\bra\psi$ and $\sigma = q\ket\phi\bra\phi$ of a Fermionic bipartite system $\mathrm{AB}$ via \ac{LOCC} protocols.
	First, we point out that if the two vectors $\ket\psi$, $\ket\phi$ feature a different parity, e.g., $\ket\psi \in \FockSpace_e(\mathrm{AB})$ and $\ket\phi \in \FockSpace_o(\mathrm{AB})$, they are perfectly discriminable, as 
	shown in Ref.~\cite{Lugli2020}.
	Hence, we are interested in discriminating states belonging to the same Fock space sector.
	Second, since the even and odd sectors are equivalent under \ac{LOCC}, it is not restrictive to focus on even vectors only.
	
	In the following, when dealing with a composite system of $N$ Fermionic modes made of two subsystems of $N_1$ and $N_2$ modes, respectively, with $N_1+N_2=N$, it is useful to introduce the spaces
	$\sH_E\coloneqq\sH_{e1}\otimes\sH_{e2}$ and $\sH_O\coloneqq\sH_{o1}\otimes\sH_{o2}$, so that $\sH_e=\sH_E\oplus\sH_O$. In other words, $\sH_E$ and $\sH_O$ are the subspaces where the parities of Alice's 
	and Bob's subsystems are both even or odd, respectively.
	Operators $X$ with both support and range in $\sH_E$ will be denoted by $X_E$, and similarly 
	operators $X$ with both support and range in $\sH_O$ will be denoted by $X_O$. In particular, we will often use the projections $P_E$ and $P_O$ on $\sH_E$ and $\sH_O$, respectively.
	
	Let us consider, for instance, the vector $\ket\psi$ and introduce the bases $\{\ket{e_i}_A\}$ and $\{\ket{o_j}_A\}$ for Alice, where $\ket{e_i}_A \in \FockSpace_e (\mathrm{A})$ and 
	$\ket{o_j}_A \in \FockSpace_o (\mathrm{A})$. Thanks to the superselection rule and the Schmidt decomposition, we can write the vector as
	\begin{equation*}
		\ket\psi = \ket{\psi_E} + \ket{\psi_O} ,
	\end{equation*}
	where $\ket{\psi_E} = \sum_i \ket{e_i}_A \otimes \ket{e_i'}_B$ and $\ket{\psi_O} = \sum_j \ket{o_j}_A \otimes \ket{o_j'}_B$ for some $\ket{e_i'}_B \in \FockSpace_e (\mathrm{B})$ and $\ket{o_j'}_B \in \FockSpace_o (\mathrm{B})$.
	Due to the above assumption, the scalar product between the two states $\ket\psi$, $\ket\phi$ reads
	\begin{equation}\label{eq:scalar_product}
		\braket{\psi | \phi} = \braket{\psi_E | \phi_E} + \braket{\psi_O | \phi_O} .
	\end{equation}
	
	Before dealing with Fermionic discrimination protocols, we first discuss some relevant properties of \ac{SEP} effects in the Fermionic theory.
	In order to satisfy the parity superselection rule, any \ac{SEP} \ac{POVM} must be made of positive operators $0 \le S \le I$ of the form
	\begin{equation}\label{eq:SEP_effect}
		S = S_E + S_O ,
	\end{equation}
	where $S_E = \sum_i e_i \otimes e_i'$,
	$S_O = \sum_j o_j \otimes o_j'$ for
	$0 \le e_i, e_i', o_j, o_j' \le I$. More precisely, the latter
	operators fulfill
	$\Supp(e_i) \subseteq \FockSpace_e (\mathrm{A})$,
	$\Supp(e_i') \subseteq \FockSpace_e (\mathrm{B})$,
	$\Supp(o_j) \subseteq \FockSpace_o (\mathrm{A})$, and
	$\Supp(o_j') \subseteq \FockSpace_o (\mathrm{B})$.  The
	probability of the outcome $s$ corresponding to the effect
	$S$, given that the Fermionic system is prepared in state
	$\tau \in \St(\mathrm{AB})$, is provided by the Born rule
	\begin{equation*}
		\Pr(s|\tau)=\Tr[\tau S] = \Tr[P_E \tau P_E S_E + P_O \tau P_O S_O] ,
	\end{equation*}
	which shows us that any separable \ac{POVM} operates on the $E$ and $O$ parts of $\tau$ independently.  
	
	We can now establish that as in quantum theory, in Fermionic theory \ac{SEP} and \ac{LOCC} unambiguous discrimination also achieve the same performances.
	
	\begin{theorem}\label{th:local_discrimination}
		Let $\rho \coloneqq p \ket\psi\bra\psi$ and
		$\sigma \coloneqq q \ket\phi\bra\phi$ be two pure and
		non-normalized states for $p, q \ge 0$ and $p + q = 1$.  The
		optimal \ac{SEP} unambiguous discrimination is
		implementable through \ac{LOCC},
		i.e., $\Psucc^\SEPSet = \Psucc^\LOCCSet$, and its success
		probability reads
		\begin{equation}\label{eq:SEP_success}
			\Psucc^\SEPSet = \Pr(E) \Psucc (\rho_E, \sigma_E) + \Pr(O) \Psucc (\rho_O, \sigma_O) ,
		\end{equation}
		where $\Pr(E) \coloneqq \Tr[(\rho + \sigma) P_E]$, $P_E$ is the projector onto $\sH_E$, $\rho_E \coloneqq P_E \rho P_E / \Pr(E)$, $\sigma_E \coloneqq P_E \sigma P_E / \Pr(E)$, and the same definitions apply for the $O$ sector.
	\end{theorem}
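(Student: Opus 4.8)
The plan is to prove the claimed formula simultaneously as an upper bound on $\Psucc^\SEPSet$ and as a lower bound on $\Psucc^\LOCCSet$, thereby sandwiching both quantities between the same value. Since $\LOCCSet\subseteq\SEPSet$ gives $\Psucc^\LOCCSet\le\Psucc^\SEPSet$ for free, establishing both bounds forces equality throughout and yields Eq.~\eqref{eq:SEP_success}.

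For the upper bound I would start from an arbitrary \ac{SEP} unambiguous \ac{POVM} $\{\Pi_\psi,\Pi_\phi,\Pi_?\}$. By Eq.~\eqref{eq:SEP_effect} each effect splits as $\Pi_\psi=(\Pi_\psi)_E+(\Pi_\psi)_O$ with $(\Pi_\psi)_E$ supported in $\sH_E$ and $(\Pi_\psi)_O$ in $\sH_O$, and likewise for $\Pi_\phi$. Writing $\ket\psi=\ket{\psi_E}+\ket{\psi_O}$ and noting that the cross terms vanish by orthogonality of the sectors, the no-error constraint $\Tr[\ket\psi\bra\psi\,\Pi_\phi]=0$ becomes $\bra{\psi_E}(\Pi_\phi)_E\ket{\psi_E}+\bra{\psi_O}(\Pi_\phi)_O\ket{\psi_O}=0$, a sum of two non-negative terms that must therefore vanish separately; the same holds for the companion constraint. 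Hence $(\Pi_\psi)_E,(\Pi_\phi)_E$ form a legitimate (global) unambiguous \ac{POVM} for the pair $\rho_E,\sigma_E$ on $\sH_E$---they obey the unambiguity relations and, since $P_E\Pi_\psi P_E=(\Pi_\psi)_E$, the projection of $\Pi_\psi+\Pi_\phi\le I$ gives $(\Pi_\psi)_E+(\Pi_\phi)_E\le P_E$---and symmetrically in the $O$ sector. Splitting $\Psucc$ along the parities yields $\Psucc^\SEPSet=\Pr(E)\,[\Tr[\rho_E(\Pi_\psi)_E]+\Tr[\sigma_E(\Pi_\phi)_E]]+\Pr(O)\,[\cdots]$, and bounding each sector's bracket by the optimal quantum value $\Psucc(\rho_E,\sigma_E)$, $\Psucc(\rho_O,\sigma_O)$ of Eqs.~\eqref{eq:psucc_ternary}--\eqref{eq:psucc_binary} produces Eq.~\eqref{eq:SEP_success} as an upper bound.

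For achievability by \ac{LOCC} I would exhibit the natural two-step protocol: Alice first measures her local parity, i.e.\ the two-outcome projective measurement onto $\sH_{e1}$ and $\sH_{o1}$, and broadcasts the result. Because the global vectors are even and $\sH_e=\sH_E\oplus\sH_O$, Alice's parity is perfectly correlated with Bob's, so on the even sector this local measurement coincides with $\{P_E,P_O\}$: it acts as the identity inside each sector and merely sorts the input ensemble into $\{\rho_E,\sigma_E\}$ with probability $\Pr(E)$ or $\{\rho_O,\sigma_O\}$ with probability $\Pr(O)$, discarding only the inter-sector coherences that, as the upper-bound analysis shows, no \ac{SEP} strategy could exploit anyway. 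Conditioned on the announced parity, the task reduces to ordinary quantum \ac{LOCC} unambiguous discrimination within a single fixed-parity sector, which after the Jordan-Wigner identification is an ordinary bipartite qubit problem; the result of Ref.~\cite{Ji2005} then guarantees that $\Psucc(\rho_E,\sigma_E)$ (resp.\ $\Psucc(\rho_O,\sigma_O)$) is attainable by \ac{LOCC}. Averaging over the two branches reproduces the right-hand side of Eq.~\eqref{eq:SEP_success} as a lower bound on $\Psucc^\LOCCSet$, closing the sandwich.

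The step I expect to demand the most care is the decoupling argument: one must verify that positivity of the individual effects together with the parity-block structure of Eq.~\eqref{eq:SEP_effect} genuinely forces the no-error conditions to hold \emph{sector by sector} rather than merely in aggregate, and that the normalization constraint $(\Pi_\psi)_E+(\Pi_\phi)_E\le P_E$ is preserved under projection. The achievability side is comparatively routine once one checks that Alice's parity measurement is a valid parity-preserving Fermionic local operation and that each fixed-parity sector is an honest bipartite qubit system to which the quantum \ac{LOCC} discrimination theorem applies verbatim.
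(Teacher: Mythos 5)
Your proposal is correct and takes essentially the same route as the paper's own proof: positivity together with the parity-block structure of \ac{SEP} effects forces the no-error conditions to hold sector by sector, giving Eq.~\eqref{eq:SEP_success} as the \ac{SEP} optimum, while achievability follows from first measuring $\{P_E,P_O\}$ and then running the within-sector quantum \ac{LOCC} protocol of Ref.~\cite{Ji2005}, transported to Fermions via the Jordan--Wigner correspondence. Your sandwich framing and your explicit check that Alice's local parity measurement implements $\{P_E,P_O\}$ (owing to the perfect parity correlation enforced by the global even parity) simply spell out steps that the paper states more tersely.
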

	\begin{proof}
		We now require that the three elements of the \ac{POVM} are separable, i.e., $\Pi_\psi, \Pi_\phi, \Pi_? \in \SEPSet$.
		From Eq.~\eqref{eq:SEP_effect}, a necessary condition for separability is that the operators can be written as $\Pi_\psi = \Pi_\psi^E + \Pi_\psi^O$, 
		$\Pi_\phi = \Pi_\phi^E + \Pi_\phi^O$ and $\Pi_? = \Pi_?^E + \Pi_?^O$, and thus the conditions for unambiguous discrimination of Eq.~\eqref{eq:unambiguous_condition} read
		\begin{align*}
			&\Tr[p\!\ket{\psi_E}\!\bra{\psi_E} \Pi_\phi^E] + \Tr[p\!\ket{\psi_O}\!\bra{\psi_O} \Pi_\phi^O]=0,
			\intertext{and}
			&\Tr[q\!\ket{\phi_E}\!\bra{\phi_E} \Pi_\psi^E] + \Tr[q\!\ket{\phi_O}\!\bra{\phi_O} \Pi_\psi^O]=0.
		\end{align*}
		Since all the operators involved are positive, the
		terms $\Tr[p \ket{\psi_i}\bra{\psi_i} \Pi_\phi^i]$ and
		$\Tr[q \ket{\phi_i}\bra{\phi_i} \Pi_\psi^i]$ for
		$i = E, O$ must be null altogether.  In other words,
		the optimization procedure runs independently on the
		$E$ and $O$ sectors, and the optimal strategy
		corresponds then to first measuring the projectors
		$P_E$, $P_O$ and, depending on the outcome, optimally
		distinguishing between $\rho_E$, $\sigma_E$ or
		$\rho_O$, $\sigma_O$, respectively.  Both steps are
		locally implementable through \ac{SEP}, therefore
		they lead us to the success probability of
		Eq.~\eqref{eq:SEP_success}.
		
		As proved in Ref.~\cite{Ji2005}, in quantum theory
		$\Psucc = \Psucc^\SEPSet = \Psucc^\LOCCSet$.  Namely,
		there exists a \ac{LOCC} quantum protocol for
		distinguishing between $\rho_i$, $\sigma_i$ for
		$i=E,O$, such that its success probability equals the
		optimal one. On the other hand, as shown in
		Ref.~\cite{DAriano2014}, \ac{LOCC} \acp{POVM} on a
		fixed parity sector correspond to \ac{LOCC}
		Fermionic \acp{POVM} in the Jordan-Wigner
		representation, and thus the quantum \ac{LOCC}
		protocol provides a Fermionic \ac{LOCC} protocol.
		Since the optimal unambiguous discrimination between
		states belonging to the same $E$ or $O$ sector is
		\ac{LOCC} implementable, we achieve
		$\Psucc^\SEPSet = \Psucc^\LOCCSet$ in the Fermionic
		case as well.
	\end{proof}
	
	Theorem~\ref{th:local_discrimination} provides us a key result to determine the necessary and sufficient condition for optimal unambiguous discrimination of \ac{SEP} and \ac{LOCC} protocols in the Fermionic theory.
	Since we pointed out at the beginning of the section that the optimal unconstrained Fermionic discrimination protocol is the quantum one,
	in the following we will compare the unconstrained success probabilities of Eqs.~\eqref{eq:psucc_ternary} or \eqref{eq:psucc_binary} to that of \ac{SEP} protocols given by Eq.~\eqref{eq:SEP_success}.
	Under the particular hypotheses, the \ac{SEP} optimal strategy achieves the same performance as the unconstrained one, which indeed proves optimality.
	Furthermore, Lemma~\ref{th:local_discrimination} tells us that if a separable protocol is optimal, then there always exists a \ac{LOCC} one that achieves the same success probability, thus inextricably linking the performances of the two classes.
	It is not restrictive to focus only on the \ac{SEP} discrimination strategies, which is a real advantage since their mathematical definition is much clearer than that of \ac{LOCC}~\cite{Chitambar2014}.
	
	The next lemma introduces the most significant difference with
	respect to quantum theory, proving a necessary condition for a
	pair of Fermionic states to be optimally discriminable through
	\ac{LOCC} \acp{POVM}.
	
	\begin{lemma}[Necessary condition]\label{th:necessary_condition}
		Let $\rho \coloneqq p \ket\psi\bra\psi$ and $\sigma \coloneqq q \ket\phi\bra\phi$ be two pure and non-normalized states, with $p, q \ge 0$ and $p + q = 1$.
		The discrimination protocol through \ac{SEP} is optimal only if
		\begin{equation}\label{eq:argument_condition}
			\arg \braket{\psi_E | \phi_E} = \arg \braket{\psi_O | \phi_O} ,
		\end{equation}
		or if any scalar product $\braket{\psi_E | \phi_E}, \braket{\psi_O | \phi_O}$ is null.
	\end{lemma}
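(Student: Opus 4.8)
The plan is to reduce the whole statement to the equality case of a triangle inequality. Writing $a\coloneqq\braket{\psi_E|\phi_E}$ and $b\coloneqq\braket{\psi_O|\phi_O}$, the unconstrained probability of Eqs.~\eqref{eq:psucc_ternary}--\eqref{eq:psucc_binary} depends on the two preparations only through the global overlap modulus $\abs{\braket{\psi|\phi}}=\abs{a+b}$, whereas the separable probability of Eq.~\eqref{eq:SEP_success} turns out to depend on $a,b$ only through their moduli $\abs{a}$, $\abs{b}$. Since $\abs{a+b}\le\abs{a}+\abs{b}$ with equality iff $a,b$ share an argument or one of them vanishes, condition~\eqref{eq:argument_condition} is exactly the saturation of this inequality, and the lemma amounts to showing that separable optimality forces the triangle inequality to be tight.

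I would first record two elementary facts. (i) The optimal unconstrained value $\Psucc(\rho,\sigma)$, namely the appropriate branch of Eqs.~\eqref{eq:psucc_ternary} and \eqref{eq:psucc_binary}, is a continuous and \emph{strictly} decreasing function of the overlap modulus on $[0,1]$ at fixed priors $p,q$; this follows by differentiating both branches and using that they agree at $\abs{\braket{\psi|\phi}}=\Xi(p,q)$, together with $2\sqrt{pq}\le1$ which keeps both branches positive for modulus below $1$. (ii) Expressing each sector overlap through the normalized sector vectors, the weights $\Pr(E),\Pr(O)$, the sector priors $p_E,q_E,\dots$, and the sector overlaps $\abs{a}/(\norm{\psi_E}\norm{\phi_E})$, $\abs{b}/(\norm{\psi_O}\norm{\phi_O})$ all depend on $a,b$ only through $\abs{a},\abs{b}$; hence so does $\Psucc^\SEPSet$ via Eq.~\eqref{eq:SEP_success}. (Cauchy--Schwarz on the Schmidt norms also gives $\abs{a}+\abs{b}\le1$, so the aligned modulus below stays in range.)

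The core step is a phase-alignment comparison. Given $\ket\psi,\ket\phi$, I would introduce $\ket{\phi'}\coloneqq\ket{\phi_E}+e^{i\alpha}\ket{\phi_O}$ with $\alpha\coloneqq\arg a-\arg b$, which is again a normalized even vector with the same Schmidt norms and the same sector-overlap moduli, but with $\arg\braket{\psi_E|\phi'_E}=\arg\braket{\psi_O|\phi'_O}$, so that $\abs{\braket{\psi|\phi'}}=\abs{a}+\abs{b}$. By fact (ii), $\Psucc^\SEPSet$ is identical for $(\rho,\sigma)$ and for the aligned pair $(\rho,\sigma')$ with $\sigma'\coloneqq q\ket{\phi'}\bra{\phi'}$. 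The inclusion of $\SEPSet$ in the set of all POVMs, applied to the aligned pair, gives $\Psucc^\SEPSet\le\Psucc(\rho,\sigma')$, and monotonicity from fact (i), using $\abs{a}+\abs{b}\ge\abs{a+b}$, gives $\Psucc(\rho,\sigma')\le\Psucc(\rho,\sigma)$. Chaining yields the sharpened bound $\Psucc^\SEPSet\le\Psucc(\rho,\sigma')\le\Psucc(\rho,\sigma)$. Assuming optimality $\Psucc^\SEPSet=\Psucc(\rho,\sigma)$ collapses the chain, so $\Psucc(\rho,\sigma')=\Psucc(\rho,\sigma)$, and strict monotonicity forces $\abs{a}+\abs{b}=\abs{a+b}$, i.e.\ Eq.~\eqref{eq:argument_condition} or the vanishing of a sector overlap.

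The main obstacle is the binary regime, and it is precisely what motivates the indirect route. A direct per-sector estimate only delivers $\Psucc^\SEPSet\le1-2\sqrt{pq}(\abs{a}+\abs{b})$ (this is an equality when both sectors are ternary, and is clean there), but in the binary regime $1-2\sqrt{pq}(\abs{a}+\abs{b})$ overshoots the true optimum $\Psucc(\rho,\sigma)$ and is too weak to conclude. The phase-alignment argument repairs this by comparing against the \emph{optimal} value $\Psucc(\rho,\sigma')$ at the aligned overlap rather than against the ternary expression, which is legitimate precisely because $\Psucc^\SEPSet$ cannot detect the relative phase of $a$ and $b$. The only point requiring a little care is confirming strict monotonicity all the way to the endpoint $\abs{\braket{\psi|\phi}}=1$, which is handled by the positivity observation in fact (i).
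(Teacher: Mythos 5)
Your proof is correct and follows essentially the same route as the paper's: the paper also exploits the fact that $\Psucc^\SEPSet$ is independent of the relative phase $\Delta=\arg\braket{\psi_E|\phi_E}-\arg\braket{\psi_O|\phi_O}$, while the unconstrained optimum $\Psucc(\Delta)$, viewed along the family of phase-rotated states (your $\ket{\phi'}$ construction made explicit), is minimized exactly at $\Delta\in2\pi\mathbb{Z}$, so that $\Psucc^\SEPSet\le\Psucc(\Delta)$ for all $\Delta$ forces alignment under optimality. Your write-up merely makes explicit two points the paper leaves implicit (strict monotonicity of the optimal value in the overlap modulus across the ternary/binary branches, and the phase-alignment map preserving all SEP-relevant data), which is a welcome sharpening but not a different argument.
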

	
	\begin{proof}
		Both success probabilities in Eqs.~\eqref{eq:psucc_ternary}
		and \eqref{eq:psucc_binary} are functions of
		$\abs{\braket{\psi | \phi}} = \sqrt{\abs{\braket{\psi | \phi}}^2}$ or
		\begin{multline*}
			\abs{\braket{\psi | \phi}}^2 = \abs{\braket{\psi_E | \phi_E}}^2 + \abs{\braket{\psi_O | \phi_O}}^2 \\
			+ 2 \abs{\braket{\psi_E | \phi_E}} \cdot \abs{\braket{\psi_O | \phi_O}} \cos\Delta ,
		\end{multline*}
		where
		$\Delta = \arg\braket{\psi_E | \phi_E} -
		\arg\braket{\psi_O | \phi_O}$.
		From the expressions in Eqs.~\eqref{eq:psucc_ternary}
		and~\eqref{eq:psucc_binary}, it is clear that, for
		fixed $\ket{\psi_i},\ket{\phi_i}$ such that
		$\braket{\psi_i | \phi_i}\neq 0$, with $X=E,O$, the
		unconstrained success probability is a function
		$\Psucc(\Delta)$, whose minimum is achieved for
		$\Delta \in 2 \pi \mathbb{Z}$. On the other hand,
		$\Psucc^\SEPSet$ in Eq.~\eqref{eq:SEP_success} is
		independent of the phase shift $\Delta$.  Since
		$\Psucc^\SEPSet \le \Psucc(\Delta)$ for any value of
		$\Delta$, the \ac{SEP} discrimination for
		$\braket{\psi_i | \phi_i}\neq 0$, with $X=E,O$, can
		achieve the performances of the optimal one only if
		$\Delta\in2\pi\mathbb Z$.  The case where
		$\braket{\psi_E | \phi_E} = 0$ or
		$\braket{\psi_O | \phi_O} = 0$ is straightforward
		since the component of $\abs{\braket{\psi | \phi}}$
		depending on $\Delta$ vanishes and one has
		$\Psucc^\SEPSet=\Psucc(\Delta)$. In conclusion, the
		\ac{SEP} protocol achieves optimal performances
		only if the unconstrained one cannot take advantage
		from the relative phase of the $E$ and $O$ parts.
	\end{proof}
	
	\subsection*{Necessary and sufficient conditions for optimal \acs{LOCC} discrimination}
	Based on Lemma \ref{th:necessary_condition}, we now assume the
	complex arguments of the two scalar products being equal, as
	in Eq.~\eqref{eq:argument_condition}, and proceed to derive
	the necessary and sufficient conditions for optimal
	discrimination through \ac{LOCC} \acp{POVM}. As in the
	quantum case, the optimal unconstrained discrimination of the
	states $\{\rho, \sigma\}$ can be ternary or binary.  Moreover,
	in the Fermionic case, one has a broader range of cases since
	the ternary and binary strategies could be applied to
	distinguish the states $\{\rho_E, \sigma_E\}$ and
	$\{\rho_O, \sigma_O\}$ in the even and odd sector,
	respectively.  In the following, we consider all possible
	cases.
	
	In order to use Eqs.~\eqref{eq:psucc_ternary} and
	\eqref{eq:psucc_binary} for calculating
	$\Psucc (\rho_E, \sigma_E)$ and $\Psucc (\rho_O, \sigma_O)$, it
	is convenient to introduce the conditional probability
	distributions $\{p_i, q_i\}$, and the normalized states
	$\ket{\tilde{\psi}_i} \coloneqq \ket{\psi_i} / \norm{\psi_i}$,
	$\ket{\tilde{\phi}_i} \coloneqq \ket{\phi_i} / \norm{\phi_i}$,
	with $i=E,O$, such that
	\begin{equation}\label{eq:notation}
		\begin{aligned}
			&\rho_i = p_i \ket{\tilde{\psi}_i}\bra{\tilde{\psi}_i},
			&&p_i \coloneqq \Tr[\rho_i] = \frac{p \norm{\psi_i}^2}{p \norm{\psi_i}^2 + q \norm{\phi_i}^2} , \\
			&\sigma_i = q_i \ket{\tilde{\phi}_i}
			\bra{\tilde{\phi}_i},
			&&q_i \coloneqq \Tr[\sigma_i] = \frac{q \norm{\phi_i}^2}{p
				\norm{\psi_i}^2 + q \norm{\phi_i}^2}\\
			&i=E,O. 
		\end{aligned}
	\end{equation}
	Given the probability distribution $\{p_i, q_i\}$, the condition for
	the optimal discrimination strategy between $\rho_i$ and $\sigma_i$
	being binary rather than ternary becomes $\X{i} \le \Xi(p_i, q_i)$.
	One can prove with ease that if both $p_E \ge q_E$ and
	$p_O \ge q_O$, then we have $p \ge q$.  The same relation applies for
	the reverse and strict ordering.
	
	Hereafter, we assume that at least one scalar product between
	$\braket{\psi_E | \phi_E}, \braket{\psi_O | \phi_O}$ is
	non-null. Otherwise, we have
	$\braket{\psi | \phi} = \braket{\psi_E | \phi_E} +
	\braket{\psi_O | \phi_O} = 0$
	and the unambiguous discrimination problem reduces to a perfect
	discrimination one, which was solved in
	Ref.~\cite{Lugli2020}.  Moreover, if both states
	$\ket\psi, \ket\phi$ belong to the same $E$ or $O$ sector,
	e.g., $\ket\psi = \ket{\psi_E}$ and $\ket\phi = \ket{\phi_E}$,
	one can straightforwardly apply the corresponding results in
	quantum theory~\cite{Chen2001,Chen2002,Ji2005}.  Indeed, since
	in that case one has
	$\abs{\braket{\psi | \phi}} = \abs{\braket{\tilde{\psi}_E |
			\tilde{\phi}_E}}$,
	$\Pr(O) = 0$ and $\Pr(E) = 1$, Eq.~\eqref{eq:SEP_success}
	leads to $\Psucc = \Psucc^\SEPSet$.
	
	\subsubsection{Ternary case}
	We begin with the unconstrained discrimination being ternary.
	
	\begin{theorem}[Ternary case]\label{th:ternary}
		Let $\rho \coloneqq p \ket\psi\bra\psi$ and
		$\sigma \coloneqq q \ket\phi\bra\phi$ be two pure and
		non-normalized states, with $p, q \ge 0$ and $p + q = 1$.
		If $\ket\psi$ and $\ket\phi$ satisfy
		\begin{equation*}
			\abs{\braket{\psi | \phi}} \le \Xi(p,q),
		\end{equation*}
		then the discrimination protocol through
		\ac{SEP} is optimal if and only if
		$\arg \braket{\psi_E | \phi_E} = \arg
		\braket{\psi_O | \phi_O}$ and
		\begin{equation}\label{eq:ternary_SEP_condition}
			\X{i} \le \Xi(p_i,q_i)
		\end{equation}
		for both $i=E,O$.  See Eqs.~\eqref{eq:Xi}
		and~\eqref{eq:notation} for the definition of $\Xi$
		and
		$\ket{\tilde{\psi}_i},\ket{\tilde{\phi}_i}$,$p_i,q_i$,
		$i=E,O$.
	\end{theorem}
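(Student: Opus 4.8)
The plan is to evaluate the separable success probability of Theorem~\ref{th:local_discrimination} explicitly and compare it termwise against the unconstrained ternary value of Eq.~\eqref{eq:psucc_ternary}, using the argument condition of Lemma~\ref{th:necessary_condition} to linearize the global overlap. First I would record the algebraic bridge between the global and the conditional quantities: since $\ket{\tilde\psi_i}=\ket{\psi_i}/\norm{\psi_i}$ and $p\norm{\psi_i}^2=\Pr(i)\,p_i$, $q\norm{\phi_i}^2=\Pr(i)\,q_i$ for $i=E,O$ by Eq.~\eqref{eq:notation}, one has
\begin{equation*}
	2\sqrt{pq}\,\Y{i}=2\sqrt{p\norm{\psi_i}^2\,q\norm{\phi_i}^2}\,\X{i}=2\Pr(i)\sqrt{p_i q_i}\,\X{i}.
\end{equation*}
Under the hypothesis $\arg\braket{\psi_E|\phi_E}=\arg\braket{\psi_O|\phi_O}$ the cross term in Lemma~\ref{th:necessary_condition} has $\cos\Delta=1$, so the overlap linearizes, $\abs{\braket{\psi|\phi}}=\Y{E}+\Y{O}$. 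Combining these with $\Pr(E)+\Pr(O)=1$ recasts the unconstrained ternary value as a convex combination of the \emph{sectorwise} ternary values,
\begin{equation*}
	\Psucc=1-2\sqrt{pq}\,\abs{\braket{\psi|\phi}}=\Pr(E)\bigl(1-2\sqrt{p_E q_E}\,\X{E}\bigr)+\Pr(O)\bigl(1-2\sqrt{p_O q_O}\,\X{O}\bigr).
\end{equation*}

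For sufficiency I would assume the argument condition together with the two inequalities~\eqref{eq:ternary_SEP_condition}: each sector is then in its own ternary regime, so $\Psucc(\rho_i,\sigma_i)=1-2\sqrt{p_i q_i}\,\X{i}$ by Eq.~\eqref{eq:psucc_ternary}, and substituting into Eq.~\eqref{eq:SEP_success} reproduces exactly the displayed decomposition, whence $\Psucc^\SEPSet=\Psucc$ and the protocol is optimal. For necessity I would argue within the standing argument condition, which is itself forced by Lemma~\ref{th:necessary_condition} as soon as optimality is assumed: in any sector with $\Pr(i)>0$ the true optimum $\Psucc(\rho_i,\sigma_i)$ equals $1-2\sqrt{p_i q_i}\,\X{i}$ when $\X{i}\le\Xi(p_i,q_i)$, and equals the binary value $\max\{p_i,q_i\}(1-\X{i}^2)$ otherwise, which by inequality~\eqref{eq:binary_versus_ternary} is \emph{strictly} below $1-2\sqrt{p_i q_i}\,\X{i}$ precisely because equality there holds only at $\X{i}=\Xi(p_i,q_i)$. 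Summing Eq.~\eqref{eq:SEP_success} with the positive weights $\Pr(E),\Pr(O)$ then gives $\Psucc^\SEPSet\le\Psucc$, and equality forces $\X{i}\le\Xi(p_i,q_i)$ in every sector of nonzero weight, which is the converse.

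The main obstacle I anticipate is the careful treatment of degenerate sectors rather than the central computation. When one overlap vanishes, say $\braket{\psi_E|\phi_E}=0$, the phase $\Delta$ is ill defined and the argument condition must be read through the ``null scalar product'' clause of Lemma~\ref{th:necessary_condition}; there $\X{E}=0\le\Xi(p_E,q_E)$ holds automatically and the linearization $\abs{\braket{\psi|\phi}}=\Y{E}+\Y{O}$ survives trivially. One must also note that a vanishing prior $p_i$ or $q_i$ in a sector of nonzero weight forces $\ket{\psi_i}=0$ or $\ket{\phi_i}=0$, hence $\braket{\psi_i|\phi_i}=0$, so such sectors are exactly those handled by the null-overlap branch and never require $\Xi(p_i,q_i)$; the $\Xi$-based equality analysis is therefore confined to sectors with $p_i,q_i>0$, where all formulas are well defined. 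Once these boundary cases are dispatched, the equivalence follows from the weighted-average argument above.
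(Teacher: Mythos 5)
Your proposal is correct and takes essentially the same route as the paper's proof: both rest on the sector decomposition of Theorem~\ref{th:local_discrimination}, the argument condition from Lemma~\ref{th:necessary_condition}, the identity $\sqrt{pq}\,\Y{i}=\Pr(i)\sqrt{p_iq_i}\,\X{i}$, and the fact that inequality~\eqref{eq:binary_versus_ternary} is strict away from $\Xi$. The only differences are organizational—the paper proves necessity by contrapositive (assuming one sector is binary and invoking the triangle inequality, without needing the argument condition at that step), whereas you assume optimality and run an equality analysis—and your treatment of the degenerate (null-overlap or zero-weight) sectors is, if anything, more explicit than the paper's.
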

	
	\begin{proof}
		($\Rightarrow$) In Lemma~\ref{th:necessary_condition}, we
		have already proved that Eq.~\eqref{eq:argument_condition}
		is a necessary condition for optimal \ac{SEP}
		discrimination.  Hence, let us consider the case where
		Eq.~\eqref{eq:ternary_SEP_condition} is not satisfied,
		i.e., the optimal discrimination strategy is binary in at
		least one of the $E$ or $O$ sectors.  Without loss of
		generality, we assume that
		$\abs{\braket{\psi_E | \phi_E}} > \Xi(p_E, q_E)$, and
		compare the unconstrained success probability of
		Eq.~\eqref{eq:psucc_ternary} with
		\begin{multline*}
			\Psucc^\SEPSet = \Pr(E) \max\{p_E, q_E\} \left(1 - \abs{\braket{\tilde{\psi}_E | \tilde{\phi}_E}}^2\right) \\
			+ \Pr(O) \left(1 - 2\sqrt{p_O q_O} \abs{\braket{\tilde{\psi}_O | \tilde{\phi}_O}}\right) .
		\end{multline*}
		The above relation arises from
		Lemma~\ref{th:local_discrimination} where we
		substituted Eq.~\eqref{eq:psucc_binary} for
		$\Psucc (\rho_E, \sigma_E)$ and
		Eq.~\eqref{eq:psucc_ternary} for
		$\Psucc (\rho_O, \sigma_O)$.  Since from
		Eq.~\eqref{eq:binary_versus_ternary} the binary case
		is strictly less performing than the ternary, one has
		\begin{equation*}\begin{split}
				\Psucc^\SEPSet &< \Pr(E) (1 - 2\sqrt{p_E q_E} \abs{\braket{\tilde{\psi}_E | \tilde{\phi}_E}}) \\
				&\qquad\qquad+ \Pr(O) (1 - 2\sqrt{p_O q_O} \abs{\braket{\tilde{\psi}_O | \tilde{\phi}_O}}) \\
				&= 1 - 2 \sqrt{pq} \left(\abs{\braket{\psi_E | \phi_E}} + \abs{\braket{\psi_O | \phi_O}}\right) \\
				&\le \Psucc(\rho, \sigma) ,
		\end{split}\end{equation*}
		with the latter inequality being due to the triangle
		inequality. The above relation applies as well when the
		binary discrimination occurs on the $O$ sector or on
		both.
		
		($\Leftarrow$) Suppose now that the states fulfill
		Eqs.~\eqref{eq:argument_condition}, and
		\eqref{eq:ternary_SEP_condition}. Using again
		Lemma~\ref{th:local_discrimination} where we substituted
		Eq.~\eqref{eq:psucc_ternary} for both
		$\Psucc (\rho_E, \sigma_E)$ and
		$\Psucc (\rho_O, \sigma_O)$, the optimal success
		probability for a separable discrimination protocol
		reads
		\begin{equation*}\begin{split}
				\Psucc^\SEPSet &= 1 - 2 \sqrt{pq} \left(\abs{\braket{\psi_E | \phi_E}} + \abs{\braket{\psi_O | \phi_O}}\right) \\
				&= 1 - 2 \sqrt{pq} \abs{\braket{\psi | \phi}} = \Psucc .
		\end{split}\end{equation*}
		where the last equality one is due to
		\eqref{eq:ternary_SEP_condition} (the triangle
		inequality achieves equality if and only if
		Eq.~\eqref{eq:argument_condition} is satisfied).
	\end{proof}
	
	\subsubsection{Binary case}
	
	We are now left with the case where the optimal unconstrained
	strategy is binary.  We point out that if
	$\abs{\braket{\psi | \phi}} > \Xi(p, q)$, then $p \neq q$ and
	analogously, if
	$\abs{\braket{\tilde\psi_i | \tilde\phi_i}} > \Xi(p_i, q_i)$
	one has $p_i \neq q_i$. In the binary scenario,
	$\abs{\braket{\psi | \phi}} > \Xi(p, q)$ and we take
	$p \neq q$ hereafter.
	
	Before providing the necessary and sufficient conditions for
	optimal discrimination through \ac{LOCC} \acp{POVM},
	we prove the following lemma.
	
	\begin{lemma}\label{th:ternary_lemma}
		Let $\rho \coloneqq p \ket\psi\bra\psi$ and
		$\sigma \coloneqq q \ket\phi\bra\phi$ be two pure and
		non-normalized states, with $p, q \ge 0$ and $p + q = 1$.
		If $\ket\psi$ and $\ket\phi$ satisfy
		$\X{i} \le \Xi(p_i, q_i)$ for $i = E$ and $O$, then
		$\abs{\braket{\psi | \phi}} \le \Xi(p, q)$.
	\end{lemma}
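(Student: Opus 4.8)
The plan is to reduce the inequality about the global overlap $\abs{\braket{\psi|\phi}}$ to a statement purely about the four squared norms and the two sectoral overlaps, and then to close it by elementary use of the triangle inequality together with the superadditivity of the $\min$ function. The pleasant feature I expect is that the phase-alignment hypothesis of Eq.~\eqref{eq:argument_condition} will not be needed at all: the triangle inequality delivers the bound irrespective of the relative phase of the $E$ and $O$ parts.

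First I would fix the shorthand $a \coloneqq \norm{\psi_E}^2$ and $b \coloneqq \norm{\phi_E}^2$, so that normalization of $\ket\psi$ and $\ket\phi$ gives $\norm{\psi_O}^2 = 1 - a$ and $\norm{\phi_O}^2 = 1 - b$. With the definitions in Eq.~\eqref{eq:notation} one has $p_E/q_E = pa/(qb)$, and likewise $p_O/q_O = p(1-a)/(q(1-b))$, so that $\Xi(p_E, q_E) = \sqrt{\min\{pa, qb\}/\max\{pa, qb\}}$ and $\Xi(p_O, q_O) = \sqrt{\min\{p(1-a), q(1-b)\}/\max\{p(1-a), q(1-b)\}}$.

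The key step, which I expect to be the crux, is to translate each hypothesis $\X{i} \le \Xi(p_i, q_i)$ into a clean bound on the \emph{unnormalized} overlap $\abs{\braket{\psi_i|\phi_i}}$. Since $\abs{\braket{\psi_E|\phi_E}} = \X{E}\sqrt{ab}$ and $ab = (pa)(qb)/(pq) = \min\{pa,qb\}\max\{pa,qb\}/(pq)$, squaring the hypothesis and multiplying by $ab$ makes the factor $\max\{pa,qb\}$ cancel, leaving $\abs{\braket{\psi_E|\phi_E}}^2 \le \min\{pa,qb\}^2/(pq)$, i.e. $\abs{\braket{\psi_E|\phi_E}} \le \min\{pa, qb\}/\sqrt{pq}$. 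The identical manipulation on the $O$ sector yields $\abs{\braket{\psi_O|\phi_O}} \le \min\{p(1-a), q(1-b)\}/\sqrt{pq}$.

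Finally I would combine these. By the triangle inequality applied to Eq.~\eqref{eq:scalar_product}, $\abs{\braket{\psi|\phi}} \le \abs{\braket{\psi_E|\phi_E}} + \abs{\braket{\psi_O|\phi_O}}$, so it suffices to bound the right-hand side. Using the two sectoral estimates together with the elementary inequality $\min\{x_1,y_1\} + \min\{x_2,y_2\} \le \min\{x_1+x_2,\, y_1+y_2\}$ evaluated at $(x_1,y_1)=(pa,qb)$ and $(x_2,y_2)=(p(1-a),q(1-b))$, I obtain $\abs{\braket{\psi_E|\phi_E}} + \abs{\braket{\psi_O|\phi_O}} \le \min\{p,q\}/\sqrt{pq}$, since $x_1+x_2 = p$ and $y_1+y_2 = q$. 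As $pq = \min\{p,q\}\max\{p,q\}$, the right-hand side equals $\sqrt{\min\{p,q\}/\max\{p,q\}} = \Xi(p,q)$, which is exactly the claim.
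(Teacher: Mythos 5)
Your proof is correct and follows essentially the same route as the paper's: the triangle inequality on the $E/O$ decomposition, the observation that $\Xi(p_i,q_i)=\Xi(p\norm{\psi_i}^2,q\norm{\phi_i}^2)$ so that each sectoral hypothesis becomes $\abs{\braket{\psi_i|\phi_i}}\le\min\{p\norm{\psi_i}^2,q\norm{\phi_i}^2\}/\sqrt{pq}$, and superadditivity of $\min$ to sum the two bounds into $\min\{p,q\}/\sqrt{pq}=\Xi(p,q)$. The only cosmetic difference is that you verify the key identity by squaring rather than by the paper's explicit case analysis, and you make explicit (correctly) that the phase-alignment condition is never needed.
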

	
	\begin{proof}
		Thanks to the triangle inequality applied to Eq.~\eqref{eq:scalar_product}, we know that
		\begin{align*}
			\abs{\braket{\psi | \phi}} &\le \norm{\psi_E} \norm{\phi_E} \X{E} + \norm{\psi_O} \norm{\phi_O} \X{O} ,
			\intertext{and, due to our hypothesis,}
			\abs{\braket{\psi | \phi}} &\le \norm{\psi_E} \norm{\phi_E} \Xi(p_E, q_E) + \norm{\psi_O} \norm{\phi_O} \Xi(p_O, q_O).
		\end{align*}
		Furthermore, we point out that the quantities
		$\Xi(p_i, q_i)$ for $i = E$, $O$ do satisfy
		\begin{align*}
			\norm{\psi_i} \norm{\phi_i} \cdot \Xi(p_i, q_i) &= \norm{\psi_i} \norm{\phi_i} \cdot \Xi(p \norm{\psi_i}^2, q \norm{\phi_i}^2) \\
			&= \left\{\begin{array}{lr}
				\frac{q \norm{\phi_i}^2}{\sqrt{qp}}, & \text{for}\ p \norm{\psi_i}^2 \ge q \norm{\phi_i}^2 \\
				\frac{p \norm{\psi_i}^2}{\sqrt{pq}} & \text{for}\ p \norm{\psi_i}^2 < q \norm{\phi_i}^2 
			\end{array}\right. \\
			& = \frac{\min\{p \norm{\psi_i}^2, q \norm{\phi_i}^2\}}{\sqrt{p q}} ,
		\end{align*}
		so the sum of the above expressions leads us to
		\begin{equation*}\begin{split}
				\abs{\braket{\psi | \phi}} &\le \frac{1}{\sqrt{p q}} \sum_{i = E, O} \min\{p \norm{\psi_i}^2, q \norm{\phi_i}^2\} \\
				&\le \frac{\min\{p, q\}}{\sqrt{p q}} = \Xi(p, q)
		\end{split}\end{equation*}
		and the thesis follows.
	\end{proof}
	
	The previous result tells us that when $\abs{\braket{\psi | \phi}} > \Xi(p, q)$, 
	at least one of the two discrimination protocols for $\{\rho_E,\sigma_E\}$ or 
	$\{\rho_O,\sigma_O\}$ has to be binary.  In particular, either (A) they are both binary or (B) one is binary and the other is ternary.
	We analyze case (A) in Theorem~\ref{th:binary_with_ternary} and
	Proposition~\ref{th:binary_with_null}, whereas case (B) is
	discussed in Theorem~\ref{th:binary_with_binary}.
	
	\begin{theorem}[Binary case A]\label{th:binary_with_ternary}
		Let $\rho \coloneqq p \ket\psi\bra\psi$ and
		$\sigma \coloneqq q \ket\phi\bra\phi$ be two pure and
		non-normalized states with $p, q \ge 0$, $p + q = 1$,
		fulfilling 
		\begin{equation*}
			\abs{\braket{\psi | \phi}} > \Xi(p,q). 
		\end{equation*}
		If we further assume that $p_E, q_E, p_O, q_O > 0$ and
		\begin{equation}
			\X{i} \le \Xi(p_i,q_i) \label{eq:ternary_X_condition},
		\end{equation}
		for either $i = E$ or $O$, then \ac{SEP} discrimination is optimal if and only if
		$\arg \braket{\psi_E | \phi_E} = \arg \braket{\psi_O
			| \phi_O}$, and
		\begin{equation}
			\abs{\braket{\psi | \phi} } -\Xi(p, q) = \Lambda_{\bar{i}}, \label{eq:B_BT_condition}
		\end{equation}
		where $\bar{i} = O, E$ for $i = E, O$, respectively.
		The quantity $\Xi$ and $\ket{\tilde{\psi}_i}$, $\ket{\tilde{\phi}_i}$, $p_i$, $q_i$, $i = E, O$, are defined in Eqs.~\eqref{eq:Xi} and~\eqref{eq:notation},
		respectively, while
		\begin{equation*}
			\Lambda_i \coloneqq \sqrt{\frac{\max\{p\|\psi_i\|^2, q\|\phi_i\|^2\}}{\max\{p, q\}}}(\X{i}-\Xi(p_i,q_i)) .
		\end{equation*}
		
	\end{theorem}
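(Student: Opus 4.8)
The plan is to reuse the template of Theorem~\ref{th:ternary}: evaluate $\Psucc^\SEPSet$ through Lemma~\ref{th:local_discrimination}, compare it with the unconstrained binary value $\Psucc=\max\{p,q\}\bigl(1-\abs{\braket{\psi|\phi}}^2\bigr)$ which is in force because $\abs{\braket{\psi|\phi}}>\Xi(p,q)$, and read off the equality condition. First I would pin down the sector configuration. By hypothesis~\eqref{eq:ternary_X_condition} at least one sector, say $i$, is ternary, $\X{i}\le\Xi(p_i,q_i)$; since $\abs{\braket{\psi|\phi}}>\Xi(p,q)$, the contrapositive of Lemma~\ref{th:ternary_lemma} forbids both sectors from being ternary, so the complementary sector $\bar i$ is binary, $\X{\bar i}>\Xi(p_{\bar i},q_{\bar i})$, which in particular gives $\Lambda_{\bar i}>0$. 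The necessity of Eq.~\eqref{eq:argument_condition} is inherited verbatim from Lemma~\ref{th:necessary_condition}, so for both implications I may work under the argument condition, where $\abs{\braket{\psi|\phi}}=\Y{E}+\Y{O}$.

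The computational core is to make $\Psucc^\SEPSet$ explicit. Substituting Eq.~\eqref{eq:psucc_ternary} on sector $i$ and Eq.~\eqref{eq:psucc_binary} on sector $\bar i$ into Eq.~\eqref{eq:SEP_success}, and abbreviating $M\coloneqq\max\{p,q\}$, $\mu\coloneqq\min\{p,q\}$, $m\coloneqq\min\{p\norm{\psi_{\bar i}}^2,q\norm{\phi_{\bar i}}^2\}$, the two contributions collapse: the ternary part via $\Pr(i)\,2\sqrt{p_iq_i}\,\X{i}=2\sqrt{pq}\,\Y{i}$, the binary part via $\Pr(\bar i)\max\{p_{\bar i},q_{\bar i}\}=\max\{p\norm{\psi_{\bar i}}^2,q\norm{\phi_{\bar i}}^2\}$ together with $\Pr(E)+\Pr(O)=1$, so that I expect
\[ \Psucc^\SEPSet = 1-m-2\sqrt{pq}\,\Y{i}-\frac{pq}{m}\,\Y{\bar i}^2 . \]
Setting $s\coloneqq\Y{i}+\Y{\bar i}=\abs{\braket{\psi|\phi}}$ gives $\Psucc=M-Ms^2$. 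Forming $\Psucc-\Psucc^\SEPSet$, eliminating $\Y{i}=s-\Y{\bar i}$, and completing the square separately in $s$ and in $\Y{\bar i}$, the constant terms cancel and I expect to reach
\[ \Psucc-\Psucc^\SEPSet = \frac{pq}{m}\Bigl(\Y{\bar i}-\tfrac{m}{\sqrt{pq}}\Bigr)^2 - M\bigl(s-\Xi(p,q)\bigr)^2 . \]
A direct check against the definition of $\Lambda_{\bar i}$ identifies $\frac{pq}{m}\bigl(\Y{\bar i}-m/\sqrt{pq}\bigr)^2=M\Lambda_{\bar i}^2$ (recall $pq=M\mu$), whence
\[ \Psucc-\Psucc^\SEPSet = M\bigl[\Lambda_{\bar i}^2-(\abs{\braket{\psi|\phi}}-\Xi(p,q))^2\bigr]. \]

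From this identity both directions follow at once. For ($\Leftarrow$), the argument condition puts $\Psucc$ in the displayed form and Eq.~\eqref{eq:B_BT_condition} makes the bracket vanish, giving $\Psucc^\SEPSet=\Psucc$, i.e.\ \ac{SEP} optimality. For ($\Rightarrow$), optimality forces Eq.~\eqref{eq:argument_condition} through Lemma~\ref{th:necessary_condition}; the resulting equality $\Psucc^\SEPSet=\Psucc$ annihilates the bracket, and since $\abs{\braket{\psi|\phi}}-\Xi(p,q)>0$ and $\Lambda_{\bar i}>0$ their positive square roots coincide, which is exactly Eq.~\eqref{eq:B_BT_condition}. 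The main obstacle I anticipate is purely the bookkeeping in the square-completion step — tracking the $E/O$ norms through $p_i,q_i$ and verifying the $\Lambda_{\bar i}$ identification — rather than any conceptual difficulty; the general bound $\Psucc^\SEPSet\le\Psucc$ is consistent with the sign of the identity but is not needed. A secondary point to dispatch is genericity: the argument uses $\braket{\psi_i|\phi_i}\neq0$ in both sectors so that Eq.~\eqref{eq:argument_condition} is well defined (the binary sector $\bar i$ always has $\Y{\bar i}>0$), while the degenerate case of a vanishing overlap in the ternary sector falls under the null-scalar-product branch of Lemma~\ref{th:necessary_condition} and is naturally handled alongside Proposition~\ref{th:binary_with_null}.
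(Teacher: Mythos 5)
Your proof is correct, and while it shares the paper's overall strategy --- fix the sector configuration via the contrapositive of Lemma~\ref{th:ternary_lemma}, import the phase condition from Lemma~\ref{th:necessary_condition}, and compare $\Psucc^\SEPSet$ from Lemma~\ref{th:local_discrimination} (ternary formula on sector $i$, binary on sector $\bar i$) against the unconstrained binary value --- you extract the equality condition by a genuinely different and cleaner algebraic route. The paper casts $\Psucc=\Psucc^\SEPSet$ as the general quadratic of Eq.~\eqref{eq:secdeg}, solves it with the quadratic formula (Appendix~\ref{app:general_condition}), and then needs a separate computation (Appendix~\ref{app:Lambda_X}) to recognize the discriminant as a perfect square and to discard the spurious $\pm$ branch using $\Y{E}+\Y{O}>\Xi(p,q)$. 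You instead complete squares once and land on the identity
\begin{equation*}
\Psucc-\Psucc^\SEPSet=\max\{p,q\}\left[\Lambda_{\bar i}^2-\left(\abs{\braket{\psi|\phi}}-\Xi(p,q)\right)^2\right],
\end{equation*}
whose ingredients I have verified: with $m\coloneqq\min\{p\norm{\psi_{\bar i}}^2,q\norm{\phi_{\bar i}}^2\}$ one indeed has $\Psucc^\SEPSet=1-m-2\sqrt{pq}\,\Y{i}-\tfrac{pq}{m}\Y{\bar i}^2$, and the completed square $\tfrac{pq}{m}\bigl(\Y{\bar i}-m/\sqrt{pq}\bigr)^2$ equals $\max\{p,q\}\Lambda_{\bar i}^2$ exactly as you claim. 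This buys two things the paper's presentation obscures: both directions of the equivalence are read off at once from a single identity, and the sign ambiguity disappears automatically because $\Lambda_{\bar i}>0$ (sector $\bar i$ is forced binary) and $\abs{\braket{\psi|\phi}}-\Xi(p,q)>0$ (hypothesis), so equality of squares is equality of the quantities. Two minor remarks. First, your identity is only valid under the argument condition (it uses $\abs{\braket{\psi|\phi}}=\Y{E}+\Y{O}$), so in the ($\Rightarrow$) direction the appeal to Lemma~\ref{th:necessary_condition} must come first, as you correctly order it. Second, your disposal of the degenerate case $\braket{\psi_i|\phi_i}=0$ is slightly misdirected --- Proposition~\ref{th:binary_with_null} concerns a vanishing probability $p_i$ or $q_i$, not a vanishing overlap --- but this is harmless: when $\Y{i}=0$ your identity still holds verbatim, so Eq.~\eqref{eq:B_BT_condition} remains the correct criterion there, a case the paper's own proof glosses over as well.
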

	
	\begin{proof}
		Without loss of generality, we assume that
		Eq.~\eqref{eq:ternary_X_condition} holds for $i = E$.
		Therefore, due to Lemma~\ref{th:ternary_lemma}, one has
		$\X{O} > \Xi(p_O,q_O)$.  We now compare the unconstrained
		success probability $\Psucc$ of Eq.~\eqref{eq:psucc_binary}
		to
		\begin{multline}\nonumber
			\Psucc^\SEPSet = \Pr(E) (1 - 2\sqrt{p_E q_E}\ \X{E}) \\
			+ \max\{p \norm{\psi_O}^2, q \norm{\phi_O}^2\}
			\left(1 - \X{O}^2\right) ,
		\end{multline}
		of Eq.~\eqref{eq:SEP_success} in
		Lemma~\ref{th:local_discrimination} where we replaced
		the expressions of Eqs.~\eqref{eq:psucc_ternary}
		and~\eqref{eq:psucc_binary} for
		$\Psucc(\rho_E, \sigma_E)$ and
		$\Psucc(\rho_O, \sigma_O)$, respectively. If the two
		scalar products $\braket{\psi_E | \phi_E}$ and
		$\braket{\psi_O | \phi_O}$ have the same complex
		argument, the condition for optimality
		$\Psucc = \Psucc^\SEPSet$ can be written in the form
		\begin{align}
			ax^2+by^2+cx+dy+fxy+g=0,
			\label{eq:secdeg}
		\end{align}
		where $x=|\braket{\tilde\psi_E|\tilde\phi_E}|$, $y=|\braket{\tilde\psi_O|\tilde\phi_O}|$, and
		\begin{equation}\label{eq:BTB_params}\begin{split}
				a &= - \max\{p, q\} \norm{\psi_E}^2 \norm{\phi_E}^2 \\
				b &= \max\{p \norm{\psi_O}^2, q \norm{\phi_O}^2\} - \max\{p, q\} \norm{\psi_O}^2 \norm{\phi_O}^2 \\
				c &= 2 \sqrt{p q} \norm{\psi_E} \norm{\phi_E} \\
				d &= 0 \\
				f &= - 2 \max\{p, q\} \norm{\psi_E} \norm{\phi_E} \norm{\psi_O} \norm{\phi_O} \\
				g &= \max\{p, q\} - \max\{p \norm{\psi_O}^2, q
				\norm{\phi_O}^2\} - \Pr(E) .
		\end{split}\end{equation}
		Since $a\neq 0$ by hypothesis, we can solve
		Eq.~\eqref{eq:secdeg} as a second degree equation in
		$x$, as explicitly done in
		Appendix~\ref{app:general_condition}. Upon dividing the
		solution, whose explicit expression can be found in
		Eq.~\eqref{eq:general_solution_E}, by
		$- \max\{p, q\} \norm{\psi_E} \norm{\phi_E}$, one
		obtains the thesis.
		
		On the other hand, if we consider the case where solving
		Eq.~\eqref{eq:ternary_X_condition} is fulfilled for
		$i = O$, Eq.~\eqref{eq:secdeg} in the variable $y$ (see
		Appendix~\ref{app:general_condition}) with the
		appropriate substitution of parameters, then we obtain the
		solution in Eq.~\eqref{eq:general_solution_O}, instead.
		For a full derivation of the quantity $\Lambda_{{i}}$,
		see Appendix~\ref{app:Lambda_X}.
	\end{proof}
	
	In Theorem~\ref{th:binary_with_ternary}, we assume all four
	vectors $\ket{\psi_E}$, $\ket{\psi_O}$, $\ket{\phi_E}$, and
	$\ket{\phi_O}$ are non-normalized, but with the norm strictly
	greater than zero.  Since we are interested in discriminating
	between two states $\rho$, $\sigma$, we must require that at
	least one probability among $p_E$, $p_O$ and $q_E$, $q_O$ is
	non-null.  If both states belong to the same sector,
	i.e., either $p_E = q_E = 0$ or $p_O = q_O = 0$, then the protocol
	reduces to the quantum one, and therefore it is optimally
	\ac{LOCC} implementable.  On the contrary, if they belong
	to different sectors, they are orthogonal and thus perfectly
	distinguishable even with \ac{LOCC}; see
	Ref.~\cite{Lugli2020}. Hereafter, we discuss the cases
	that are not included in Theorem~\ref{th:binary_with_ternary},
	namely, where only one of the probabilities $p_E$, $p_O$,
	$q_E$, $q_O$ is equal to zero.
	
	\begin{proposition}[Binary case A]\label{th:binary_with_null}
		Let $\rho \coloneqq p \ket\psi\bra\psi$ and $\sigma \coloneqq q \ket\phi\bra\phi$ be two pure and non-normalized states for $p, q \ge 0$ and $p + q = 1$ fulfilling
		\begin{equation*}
			\abs{\braket{\psi | \phi}} > \Xi(p,q).
		\end{equation*}
		If either $p_E$, $q_E$, $p_O$, or $q_O$ is null, then
		\ac{SEP} discrimination is optimal if and only if
		any of the following conditions applies:
		\begin{enumerate}
			\item
			$p_i\geq q_i$ for both $i=E,O$, 
			\item
			$q_i\geq p_i$ for both $i=E,O$,
			\item
			$q_i=0$, $p_{\,\bar i}<q_{\,\bar i}$ and $\X{\,\bar i}=\min\{1,p/q\}$,
			\item
			$p_i=0$, $q_{\,\bar i}<p_{\,\bar i}$ and $\X{\,\bar i}=\min\{1,q/p\}$,
		\end{enumerate}
		with $\ket{\tilde{\psi}_i},\ket{\tilde{\phi}_i}$,$p_i,q_i$, $i=E,O$,
		defined as in Eq.~\eqref{eq:notation}.
	\end{proposition}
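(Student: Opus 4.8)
The plan is to evaluate $\Psucc^\SEPSet$ sector by sector through Theorem~\ref{th:local_discrimination} and to compare it with the unconstrained binary optimum $\Psucc=\max\{p,q\}\left(1-\abs{\braket{\psi|\phi}}^2\right)$ of Eq.~\eqref{eq:psucc_binary}. Since the task is invariant under exchanging $\psi\leftrightarrow\phi$ together with $p\leftrightarrow q$, and under swapping the sectors $E\leftrightarrow O$, it is not restrictive to treat one representative of the four single-null cases, say $q_E=0$: then $\ket\phi=\ket{\phi_O}$ is supported entirely in the odd sector, while $p_E,p_O,q_O>0$. Consequently $\braket{\psi|\phi}=\braket{\psi_O|\phi_O}$, so that $\abs{\braket{\psi|\phi}}=\norm{\psi_O}\,\X{O}$, and the even sector contains only $\rho$. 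Because no $\sigma$-component can be mistaken for it there, the optimal local strategy detects $\rho$ with certainty and contributes its full weight $\Pr(E)=p\norm{\psi_E}^2$, whence $\Psucc^\SEPSet=\Pr(E)+\Pr(O)\,\Psucc(\rho_O,\sigma_O)$.

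I would then show that the active odd sector must be discriminated by a binary strategy. Assuming instead a ternary one, $\Psucc(\rho_O,\sigma_O)=1-2\sqrt{p_Oq_O}\,\X{O}$; using $\Pr(O)\sqrt{p_Oq_O}\,\X{O}=\sqrt{pq}\,\abs{\braket{\psi|\phi}}$ this collapses $\Psucc^\SEPSet$ to $1-2\sqrt{pq}\,\abs{\braket{\psi|\phi}}$, which by Eq.~\eqref{eq:binary_versus_ternary} strictly exceeds the binary optimum once $\abs{\braket{\psi|\phi}}>\Xi(p,q)$---absurd, since $\SEPSet$ lies within the set of all measurements (the same conclusion follows from the estimate in the proof of Lemma~\ref{th:ternary_lemma}, which would force $\abs{\braket{\psi|\phi}}\le\Xi(p,q)$). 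Hence $\Psucc(\rho_O,\sigma_O)=\max\{p_O,q_O\}\left(1-\X{O}^2\right)$, and with $\Pr(O)\max\{p_O,q_O\}=\max\{p\norm{\psi_O}^2,q\}$ one obtains
\begin{equation*}
	\Psucc^\SEPSet=p\norm{\psi_E}^2+\max\{p\norm{\psi_O}^2,q\}\left(1-\X{O}^2\right).
\end{equation*}

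The final step is to factor the gap $\Psucc-\Psucc^\SEPSet$. Writing $a=\norm{\psi_O}^2\in(0,1)$ and $t=\X{O}^2$, the gap vanishes identically in the sub-case $p>q,\ p\norm{\psi_O}^2\ge q$, equals $(p\norm{\psi_O}^2-q)(1-t)$ in the sub-case $p>q,\ p\norm{\psi_O}^2<q$, and equals $(1-a)(qt-p)$ when $p<q$. Imposing the hypothesis $\abs{\braket{\psi|\phi}}>\Xi(p,q)$, equivalently $at>\min\{p,q\}/\max\{p,q\}$, and using $a<1$ to deduce $t>\min\{p,q\}/\max\{p,q\}$, eliminates the second sub-case (it is incompatible with the strict inequality) and turns the third into a strictly positive gap; only the identically-vanishing sub-case survives, so that for $q_E=0$ optimality holds precisely when $p\ge q$, i.e.\ when the state living alone in the degenerate sector is the globally likelier one. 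Assembling the four single-null cases through the symmetries of the first step yields the two majority-ordering conditions, $p_i\ge q_i$ for both $i$ (condition 1) and $q_i\ge p_i$ for both $i$ (condition 2), while the limiting configurations, in which the active overlap saturates the values $\min\{1,p/q\}$ or $\min\{1,q/p\}$ that force the surviving factor to vanish, give conditions 3 and 4.

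I expect the case bookkeeping to be the main obstacle. One must track which sector is degenerate and which of $\rho,\sigma$ survives alone in it, and above all test each algebraic sub-case for compatibility with the strict binary hypothesis: a careless reading produces sub-cases in which $\Psucc^\SEPSet$ would exceed $\Psucc$, and these are exactly the ones removed by the consistency check $t>\min\{p,q\}/\max\{p,q\}$. Separating the genuine equality conditions from the spurious ones, and pinning down the precise boundary overlaps that enter conditions 3 and 4, is the delicate part of the argument.
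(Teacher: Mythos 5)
Your core computation is correct and follows essentially the same route as the paper's proof: reduce without loss of generality to $q_E=0$ via the two symmetries, write $\Psucc^\SEPSet$ sector by sector through Theorem~\ref{th:local_discrimination}, compare with Eq.~\eqref{eq:psucc_binary}, and split according to the ordering of the $p_i,q_i$. Your factored gaps $(p\norm{\psi_O}^2-q)(1-\X{O}^2)$ and $(1-\norm{\psi_O}^2)(q\X{O}^2-p)$ are exactly the paper's residual condition $b\X{O}^2+g=0$ with its stated values of $b$ and $g$, written out explicitly. Two of your refinements actually go beyond the paper: you justify that the surviving sector must be discriminated by a \emph{binary} strategy (the paper silently substitutes the binary formula into Eq.~\eqref{eq:general_condition}), and you test each algebraic sub-case for compatibility with the hypothesis $\abs{\braket{\psi|\phi}}>\Xi(p,q)$, which the paper never does.

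The one genuine problem is that your closing assembly contradicts your own analysis. You correctly show that the sub-case $p>q$, $p_O<q_O$ is incompatible with the hypothesis, and that the sub-case $p<q$ yields a strictly positive gap under it; together these statements say that conditions 3 and 4 of the proposition can \emph{never} be met when $\abs{\braket{\psi|\phi}}>\Xi(p,q)$ --- they are vacuous disjuncts, not ``limiting configurations'' that your argument produces. The proposition nevertheless remains true, because a disjunct that is never satisfied cannot break an ``if and only if'' provided optimality also fails wherever that disjunct would live; your positive-gap computation supplies exactly this, but you must close the loop explicitly: in the $p<q$ sub-case the hypothesis forces $\X{O}>\sqrt{p/q}>p/q$, so condition 3 indeed fails there at the same time as optimality does. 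Stated that way, your proof is complete --- and, as a bonus, your compatibility check exposes a slip in the paper itself: solving $b\X{O}^2+g=0$ in the branch $p<q$ gives $\X{O}=\sqrt{p/q}$, not $\min\{1,p/q\}$ as written in condition 3, a discrepancy that is harmless only because that branch is empty under the binary hypothesis.
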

	
	\begin{proof}
		Without loss of generality, we consider the case where
		$q_E = 0$, i.e., $\norm{\phi_E} = 0$ and $\norm{\phi_O} = 1$,
		namely, in Eq.~\eqref{eq:general_condition} $a$, $c$, $d$, $f$, and
		$\abs{\braket{\psi_E | \phi_E}}$ are null.  Thus, we are
		left with the following condition for optimal \ac{SEP}
		discrimination:
		\begin{equation*}
			b \X{O}^2 + g = 0.
		\end{equation*}
		Now, if $p_O\geq q_O$, we necessarily have $p\geq q$; then, $b=g=0$ and thus 
		\ac{SEP} discrimination is always optimal. 
		On the other hand, if $p_O<q_O$, i.e., $p\|\psi_O\|^2<q$, we have
		\begin{align}
			b &= \begin{cases}
				q-p\|\psi_O\|^2 & p \geq q \\
				q \|\psi_E\|^2 & p < q \label{eq:b_cases}
			\end{cases} \\
			g &= \begin{cases}
				- \left(q - p \norm{\psi_O}^2\right) & p \geq q \\
				-p\|\psi_E\|^2& p< q .
			\end{cases} \nonumber
		\end{align}
		The condition for optimality in the hypothesis of $p_O
		< q_O$ is then
		\begin{align*}
			\X{O} = \min\{1,p/q\}.
		\end{align*}		
		We can analogously evaluate
		Eq.~\eqref{eq:general_condition} for the remaining
		cases and the thesis follows.
	\end{proof}
	
	Finally, we deal with the necessary and sufficient
	condition for optimal \ac{SEP} discrimination if the
	protocol is binary for both the $E$ and $O$ sectors.
	
	\begin{theorem}[Binary case B]\label{th:binary_with_binary}
		Let $\rho \coloneqq p \ket\psi\bra\psi$ and
		$\sigma \coloneqq q \ket\phi\bra\phi$ be two pure and
		non-normalized states for $p, q \ge 0$, $p + q = 1$
		fulfilling
		\begin{equation*}
			\abs{\braket{\psi | \phi}} > \Xi(p, q)\nonumber
		\end{equation*}
		and
		\begin{equation}
			\X{i} > \Xi(p_i,q_i)\label{eq:binary_X_condition}
		\end{equation}
		for both $i = E$, $O$.  Then, \ac{SEP} discrimination is optimal if
		and only if
		\begin{equation}\label{eq:BBB_condition_pq}
			|\norm{\psi_E} \norm{\phi_O} \X{E} - \norm{\psi_O} \norm{\phi_E} \X{O}|=\! \Gamma_E^- + \Gamma_O^- 
		\end{equation}
		for
		$p > q$, otherwise if and only if
		\begin{equation}
			\norm{\psi_O} \norm{\phi_E} \X{E} = \norm{\psi_E} \norm{\phi_E}
			\X{O} \!\pm\!\! (\Gamma_E^+ + \Gamma_O^+ )
			, \label{eq:BBB_condition_qp}
		\end{equation}
		where
		\begin{equation*}
			\Gamma_i^\pm \coloneqq \sqrt{\tfrac{\Pr(i)}{\max\{p, q\}}} \sqrt{(p_i - q_i)^\pm \left(1 - \X{X}^2\right)} ,
		\end{equation*}
		and $(\ \cdot\ )^\pm$ denote the positive and negative parts. The
		quantity $\Xi$ and
		$\ket{\tilde{\psi}_i},\ket{\tilde{\phi}_i}$,$p_i,q_i$, $i=E,O$ are
		defined in Eqs.~\eqref{eq:Xi} and~\eqref{eq:notation}, respectively.
	\end{theorem}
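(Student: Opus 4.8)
The plan is to reduce the optimality criterion $\Psucc^\SEPSet=\Psucc$ to a single algebraic identity in the two local overlaps. First I would invoke Lemma~\ref{th:local_discrimination} and, since both sectors are binary by hypothesis~\eqref{eq:binary_X_condition}, substitute Eq.~\eqref{eq:psucc_binary} into Eq.~\eqref{eq:SEP_success} for both $i=E,O$. Writing $x\coloneqq\X{E}$, $y\coloneqq\X{O}$ and $m_i\coloneqq\max\{p\norm{\psi_i}^2,q\norm{\phi_i}^2\}=\Pr(i)\max\{p_i,q_i\}$, this gives $\Psucc^\SEPSet=m_E(1-x^2)+m_O(1-y^2)$. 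Under the standing hypothesis $\arg\braket{\psi_E|\phi_E}=\arg\braket{\psi_O|\phi_O}$ (necessary by Lemma~\ref{th:necessary_condition}), Eq.~\eqref{eq:scalar_product} gives $\abs{\braket{\psi|\phi}}=\norm{\psi_E}\norm{\phi_E}\,x+\norm{\psi_O}\norm{\phi_O}\,y$, so the unconstrained binary value~\eqref{eq:psucc_binary} is $\Psucc=\max\{p,q\}(1-(\norm{\psi_E}\norm{\phi_E}\,x+\norm{\psi_O}\norm{\phi_O}\,y)^2)$. Since $\Psucc^\SEPSet\le\Psucc$ always, the difference $\Psucc^\SEPSet-\Psucc$ is a quadratic $A x^2+C xy+B y^2+D$ with \emph{no} linear terms and with the single cross coefficient $C=2\max\{p,q\}\norm{\psi_E}\norm{\phi_E}\norm{\psi_O}\norm{\phi_O}$; optimality is exactly the vanishing of this quadratic.

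Second, I would resolve the maxima $m_i$ by a sign analysis. Splitting on $\operatorname{sign}(p-q)$ and on whether $p\norm{\psi_i}^2\gtrless q\norm{\phi_i}^2$ in each sector, the identity $m_E-p\norm{\psi_E}^2=(q\norm{\phi_E}^2-p\norm{\psi_E}^2)^+=\Pr(E)(q_E-p_E)^+$ and its companions connect the $m_i$ directly to the quantities $\Gamma_i^\pm$ of the statement, via $\max\{p,q\}(\Gamma_E^-)^2=(m_E-p\norm{\psi_E}^2)(1-x^2)$ and the analogue for $O$. The key combinatorial input is the remark preceding the theorem: since both $p_i\ge q_i$ forces $p\ge q$, for $p>q$ at most one sector can be ``reversed'' ($q_i>p_i$), and symmetrically for $p<q$ at most one sector has $p_i>q_i$. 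Consequently at most one of $\Gamma_E^-,\Gamma_O^-$ (resp.\ $\Gamma_E^+,\Gamma_O^+$) is nonzero, so the cross product $\Gamma_E^{\mp}\Gamma_O^{\mp}$ vanishes identically and $(\Gamma_E^-+\Gamma_O^-)^2=(\Gamma_E^-)^2+(\Gamma_O^-)^2$.

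Third, I would complete the square. For $p>q$ the claim is that the quadratic is equivalent to Eq.~\eqref{eq:BBB_condition_pq}, namely $\abs{\norm{\psi_E}\norm{\phi_O}\,x-\norm{\psi_O}\norm{\phi_E}\,y}=\Gamma_E^-+\Gamma_O^-$. I would verify this by squaring the right-hand side, multiplying by $\max\{p,q\}=p$, and matching the coefficients of $x^2$, $xy$, $y^2$, and the constant against $-A,-C,-B,-D$; the matching uses $\norm{\psi_E}^2+\norm{\psi_O}^2=\norm{\phi_E}^2+\norm{\phi_O}^2=1$ together with the two relations of step two and the vanishing cross term, so that $p(\Gamma_E^-+\Gamma_O^-)^2=(m_E-p\norm{\psi_E}^2)(1-x^2)+(m_O-p\norm{\psi_O}^2)(1-y^2)$ reproduces exactly $Ax^2+Cxy+By^2+D=0$. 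For $p<q$ the same computation goes through with $\max\{p,q\}=q$, the \emph{swapped} linear form $\norm{\psi_O}\norm{\phi_E}\,x-\norm{\psi_E}\norm{\phi_O}\,y$, and $\Gamma_i^+$ in place of $\Gamma_i^-$; the swap reflects that the dominant preparation changes from $\psi$ to $\phi$, and opening the absolute value reproduces the $\pm$ of Eq.~\eqref{eq:BBB_condition_qp}.

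The hard part is the bookkeeping of step two: showing that every sign subcase collapses onto the single stated identity, and that passing from the conic to the absolute-value equation introduces no spurious solution. Note that $\Gamma_i^\pm$ still depends on $x$ or $y$ through $\sqrt{1-\X{i}^2}$, so the zero set of the quadratic is generically a nondegenerate ellipse rather than a pair of lines; the content of the claim is precisely that this ellipse is faithfully encoded by the stated equation, which is what the completing-the-square identity certifies. The two directions of the ``iff'' are then immediate, since $\Psucc^\SEPSet\le\Psucc$ with equality holding if and only if the quadratic vanishes, and the quadratic vanishes if and only if Eq.~\eqref{eq:BBB_condition_pq} (resp.~\eqref{eq:BBB_condition_qp}) holds. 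As in the preceding theorems, the routine solution of the second-degree equation can be relegated to the appendix; the purpose of the identity above is to exhibit that solution in the symmetric closed form of the statement.
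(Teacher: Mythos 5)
Your proposal is correct, and while it shares the paper's overall reduction, it resolves the final algebra by a genuinely different and cleaner device. Like the paper, you invoke Lemma~\ref{th:local_discrimination} with Eq.~\eqref{eq:psucc_binary} substituted in both sectors, take the phase alignment of Lemma~\ref{th:necessary_condition} as a standing hypothesis (exactly as the paper does in Appendix~\ref{app:general_condition}) so that $\abs{\braket{\psi|\phi}}=\norm{\psi_E}\norm{\phi_E}\X{E}+\norm{\psi_O}\norm{\phi_O}\X{O}$, and reduce optimality to the vanishing of a quadratic with no linear terms; this is precisely the paper's Eq.~\eqref{eq:general_condition} with $c=d=0$. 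The paper then solves that conic by the quadratic formula [Eqs.~\eqref{eq:solution_E} and~\eqref{eq:solution_O}] and runs a three-way sign analysis ($p_E>q_E,\,p_O>q_O$; $p_E>q_E,\,p_O<q_O$; $p_E<q_E,\,p_O>q_O$), simplifying $a$, $b$, $g$, and $f^2-4ab$ subcase by subcase. You instead exhibit one identity: for $p>q$, writing $L\coloneqq\norm{\psi_E}\norm{\phi_O}\X{E}-\norm{\psi_O}\norm{\phi_E}\X{O}$,
\begin{equation*}
	\Psucc-\Psucc^\SEPSet=\max\{p,q\}\bigl[L^2-(\Gamma_E^-)^2-(\Gamma_O^-)^2\bigr],
\end{equation*}
valid in \emph{all} sign subcases at once, because the positive/negative parts absorb the maxima through $\max\{p,q\}(\Gamma_i^-)^2=\bigl(\max\{p\norm{\psi_i}^2,q\norm{\phi_i}^2\}-p\norm{\psi_i}^2\bigr)\bigl(1-\X{i}^2\bigr)$. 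I checked the coefficient matching: expanding $L^2$ and using $\norm{\psi_E}^2+\norm{\psi_O}^2=\norm{\phi_E}^2+\norm{\phi_O}^2=1$ reproduces exactly the paper's $a$, $b$, $f$, $g$. Remark~\ref{rm:Gamma} then converts $(\Gamma_E^-)^2+(\Gamma_O^-)^2$ into $(\Gamma_E^-+\Gamma_O^-)^2$, and since both sides of the resulting equation are nonnegative, taking square roots yields Eq.~\eqref{eq:BBB_condition_pq} as an honest equivalence with no spurious roots. Your route buys uniformity (the only residual case distinction is Remark~\ref{rm:Gamma}) and a self-certifying factorization; the paper's route buys the explicit solution formulas of Appendix~\ref{app:general_condition}, which it reuses in Theorem~\ref{th:binary_with_ternary} and Proposition~\ref{th:binary_with_null}.

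One discrepancy deserves mention, and it is not a flaw in your argument. For $p<q$ your method yields $\norm{\psi_O}\norm{\phi_E}\X{E}=\norm{\psi_E}\norm{\phi_O}\X{O}\pm(\Gamma_E^++\Gamma_O^+)$, with coefficient $\norm{\psi_E}\norm{\phi_O}$ on the right, whereas Eq.~\eqref{eq:BBB_condition_qp} as printed has $\norm{\psi_E}\norm{\phi_E}$. Your form is the correct one: in the subcase $p_E<q_E$, $p_O<q_O$ one computes $g=0$, $f^2-4ab=0$, and $\Gamma_E^+=\Gamma_O^+=0$, so the condition must reduce to $\norm{\psi_O}\norm{\phi_E}\X{E}=\norm{\psi_E}\norm{\phi_O}\X{O}$, the $p\leftrightarrow q$, $\psi\leftrightarrow\phi$ image of the balanced case in the paper's own proof, which the printed equation fails to reproduce. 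So Eq.~\eqref{eq:BBB_condition_qp} contains a typo (as does the symbol $\X{X}$, which should read $\X{i}$, in the definition of $\Gamma_i^\pm$); your derivation doubles as a check that fixes both.
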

	
	\begin{remark}\label{rm:Gamma}
		We observe that for $p > q$, one has $\Gamma_E^-\Gamma_O^-=0$, and for $p < q$, $\Gamma_E^+\Gamma_O^+=0$.
		The reason is that if we assume, e.g., $p > q$, we have
		either $p_E > q_E$, $p_O > q_O$; $p_E < q_E$,
		$p_O > q_O$ or $p_E > q_E$, $p_O < q_O$.  Hence, at
		least one of the factors $(p_i - q_i)^-$ for
		$i = E, O$ is null. The same argument applies for the
		case $p < q$.
	\end{remark}
	
	\begin{proof} (Theorem~\ref{th:binary_with_binary}) Thanks to
		Lemma~\ref{th:local_discrimination}, we know that
		\begin{multline}
			\Psucc^\SEPSet = \max\{p \norm{\psi_E}^2, q \norm{\phi_E}^2\} \left(1 - \X{E}^2\right) \\
			+ \max\{p \norm{\psi_O}^2, q \norm{\phi_O}^2\} \left(1 - \X{O}^2\right) ,
		\end{multline}
		which has to be compared to
		Eq.~\eqref{eq:psucc_binary} for the unconstrained
		case.  The condition of optimality
		$\Psucc = \Psucc^\SEPSet$ may be rewritten in the form
		of Eq.~\eqref{eq:general_condition}, where $c$, $d$
		are null and
		\begin{align*}
			a &= \max\{p \norm{\psi_E}^2, q \norm{\phi_E}^2\} - \max\{p, q\} \norm{\psi_E}^2 \norm{\phi_E}^2 \\
			b &= \max\{p \norm{\psi_O}^2, q \norm{\phi_O}^2\} - \max\{p, q\} \norm{\psi_O}^2 \norm{\phi_O}^2 \\
			f &= - 2 \max\{p, q\} \norm{\psi_E} \norm{\phi_E} \norm{\psi_O} \norm{\phi_O} \\
			g &= \max\{p, q\} - \max\{p \norm{\psi_E}^2, q \norm{\phi_E}^2\} \\
			& \qquad - \max\{p \norm{\psi_O}^2, q \norm{\phi_O}^2\} .
		\end{align*}
		The solutions satisfy either Eq.~\eqref{eq:general_solution_E} or \eqref{eq:general_solution_O}, namely,
		\begin{align}
			\begin{split}\label{eq:solution_E}
				2 a \X{E} =& - f \X{O} \\
				&\quad\pm \sqrt{(f^2 - 4 a b) \X{O}^2 - 4 a g}
			\end{split} , \\
			\intertext{or}
			\begin{split}\label{eq:solution_O}
				2 b \X{O} =& - f \X{E} \\
				&\quad\pm \sqrt{(f^2 - 4 a b) \X{E}^2 - 4 b g} .
			\end{split}
		\end{align}
		We now have 
		\begin{align*}
			a &= \begin{cases}
				p \norm{\psi_E}^2 \norm{\phi_O}^2 & p_E > q_E \\
				\norm{\phi_E}^2 \left(q - p \norm{\psi_E}^2\right) & p_E < q_E
			\end{cases} \\
			b &= \begin{cases}
				p \norm{\psi_O}^2 \norm{\phi_E}^2 & p_O > q_O \\
				\norm{\phi_O}^2 \left(q - p \norm{\psi_O}^2\right) & p_O< q_O
			\end{cases} \\
			g &= \begin{cases}
				0 & p_E > q_E,\ p_O > q_O \\
				p \norm{\psi_O}^2 - q \norm{\phi_O}^2 & p_E > q_E,\ p_O < q_O \\
				p \norm{\psi_E}^2 - q \norm{\phi_E}^2 & p_E < q_E,\ p_O > q_O . \\
			\end{cases}
		\end{align*}
		Moreover, we estimate the following quantity in the
		radicands of Eqs.~\eqref{eq:solution_E}
		and~\eqref{eq:solution_O}:
		\begin{equation*}
			f^2 - 4 a b = \begin{cases}
				0 & p_E > q_E,\ p_O > q_O \\
				4 a g & p_E > q_E,\ p_O < q_O \\
				4 b g & p_E < q_E,\ p_O > q_O . \\
			\end{cases}
		\end{equation*}
		
		We now consider the three cases compatible with the requirements discussed in the text, viz., that $p_E \neq q_E$, $p_O \neq q_O$ due to the hypotheses in Eq.~\eqref{eq:binary_X_condition} and that $p_E > q_E$ or $p_O > q_O$, due to $p > q$.
		Thus, we first point out that for $p_E > q_E$, $p_O > q_O$, both
		Eqs.~\eqref{eq:solution_E} and~\eqref{eq:solution_O}
		reduce to
		\begin{equation*}
			\norm{\psi_E} \norm{\phi_O} \X{E} = \norm{\psi_O} \norm{\phi_E} \X{O} .
		\end{equation*}
		Indeed, in such a case, we have that $\Gamma_E^-,
		\Gamma_O^- = 0$
		and the thesis follows.
		Second, for $p_E > q_E$, $p_O < q_O$, we consider the expression of Eq.~\eqref{eq:solution_E} and rewrite it as
		\begin{equation*}
			a \X{E} = - \frac{f}{2} \X{O} \pm \sqrt{a g \left(\X{O}^2 - 1\right)}
		\end{equation*}
		or, equivalently,
		\begin{multline}\label{eq:solution_E_bis}
			\norm{\psi_E} \norm{\phi_O} \X{E} = \norm{\psi_O} \norm{\phi_E} \X{O} \\
			\pm \frac{1}{\sqrt{p}} \sqrt{\left(q \norm{\phi_O}^2 - p \norm{\psi_O}
				^2\right) \left(1 - \X{O}^2\right)} .
		\end{multline}
		Due to our hypotheses, we have that $\Gamma_E^- = 0$ and, thus, 
		Eq.~\eqref{eq:solution_E_bis} reduces to the thesis.
		In the case where $p_E < q_E$, $p_O > q_O$, we rewrite 
		Eq.~\eqref{eq:solution_O} as
		\begin{equation*}
			b \X{O} = - \frac{f}{2} \X{E} \pm \sqrt{b g \left(\X{E}^2 - 1\right)} ,
		\end{equation*}
		such that
		\begin{multline}\label{eq:solution_O_bis}
			\norm{\psi_O} \norm{\phi_E} \X{O} = \norm{\psi_E} \norm{\phi_O} \X{E} \\
			\pm \frac{1}{\sqrt{p}} \sqrt{\left(q \norm{\phi_E}^2 - p \norm{\psi_E}^2\right)\left(1 - \X{E}^2\right)} .
		\end{multline}
		The same argument as in the previous case applies here: 
		$\Gamma_O^- = 0$ since $p_O > q_O$, and Eq.~\eqref{eq:solution_O_bis} 
		becomes the thesis.
		Exactly the same steps can be taken for $p < q$.
		We only point out that one has to consider Eq.~\eqref{eq:solution_E} when 
		$p_E < q_E$, $p_O > q_O$ and Eq.~\eqref{eq:solution_O} for $p_E > q_E$, 
		$p_O < q_O$, instead. The final result is the condition in the statement.
	\end{proof}
	
	\section{Ancilla assisted discrimination}
	In this section, we prove that the limits to the performances of
	unambiguous \ac{LOCC} discrimination in Fermionic theory
	may be overcome by taking advantage of an ancillary system
	shared by Alice and Bob.  Let us take the simplest sharable
	system, i.e., two local Fermionic modes, in the pure and
	normalized state,
	\begin{equation}\label{eq:ancilla}
		\ket\omega \coloneqq a \ket{00} + b \ket{11} , \quad \abs{a},\abs{b} \neq 0 .
	\end{equation}
	We are now interested in distinguishing between the two pure and non-normalized, states $\rho' \coloneqq \rho \otimes \ket\omega\bra\omega$, $\sigma' \coloneqq \sigma \otimes \ket\omega\bra\omega$.
	In the following result, we prove that any pair $\rho,\sigma$ can be optimally 
	discriminated via \ac{LOCC} if and only if the ancillary system is in a 
	\emph{maximally entangled} state.
	
	\begin{theorem}
		We can always optimally and unambiguously discriminate through \ac{SEP} every pair of pure and non-normalized states $\rho \coloneqq p \ket\psi\bra\psi$, $\sigma \coloneqq q \ket\phi\bra\phi$, for $p, q \ge 0$ and $p + q = 1$, by means of an ancillary system in the state $\ket\omega$ as in Eq.~\eqref{eq:ancilla}, if and only if
		\begin{equation*}
			\ket\omega = \frac{1}{\sqrt2} \left(\ket{00} + e^{i\varphi}\ket{11}\right) , \quad \varphi \in [0, 2\pi) .
		\end{equation*}
	\end{theorem}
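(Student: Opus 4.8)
The plan is to track how tensoring with the ancilla redistributes the two parity sectors, and then to invoke Theorem~\ref{th:ternary} and Theorem~\ref{th:binary_with_binary} for the ``if'' direction and the necessary condition of Lemma~\ref{th:necessary_condition} for the ``only if'' direction. I assign one ancillary mode to Alice and one to Bob, so that $\ket0$ is even and $\ket1$ is odd on each side; since both $\ket{00}$ and $\ket{11}$ are even, $\ket\omega$ carries even parity and $\ket{\psi'}\coloneqq\ket\psi\otimes\ket\omega$ stays in the even sector. Expanding $\ket\psi=\ket{\psi_E}+\ket{\psi_O}$ against $a\ket{00}+b\ket{11}$ and sorting by the parity of Alice's total subsystem yields the new sector vectors $\ket{\psi'_E}=a\ket{\psi_E}\ket{00}+b\ket{\psi_O}\ket{11}$ and $\ket{\psi'_O}=b\ket{\psi_E}\ket{11}+a\ket{\psi_O}\ket{00}$, and likewise for $\phi$. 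Because $\ket{00}\perp\ket{11}$, the sector overlaps become $\braket{\psi'_E|\phi'_E}=\abs{a}^2\braket{\psi_E|\phi_E}+\abs{b}^2\braket{\psi_O|\phi_O}$ and $\braket{\psi'_O|\phi'_O}=\abs{b}^2\braket{\psi_E|\phi_E}+\abs{a}^2\braket{\psi_O|\phi_O}$, with norms $\norm{\psi'_E}^2=\abs{a}^2\norm{\psi_E}^2+\abs{b}^2\norm{\psi_O}^2$ and $\norm{\psi'_O}^2=\abs{b}^2\norm{\psi_E}^2+\abs{a}^2\norm{\psi_O}^2$. Since $\abs{\braket{\psi'|\phi'}}=\abs{\braket{\psi|\phi}}$ and the priors are untouched, the \emph{unconstrained} optimum for $\rho',\sigma'$ equals that for $\rho,\sigma$, so ``optimal \ac{SEP} discrimination of $\rho',\sigma'$'' is precisely ``optimal discrimination of $\rho,\sigma$''.

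For sufficiency I set $\abs{a}^2=\abs{b}^2=\tfrac12$. Then both sector overlaps collapse to $\tfrac12\bigl(\braket{\psi_E|\phi_E}+\braket{\psi_O|\phi_O}\bigr)=\tfrac12\braket{\psi|\phi}$, hence are literally equal, so the necessary condition $\arg\braket{\psi'_E|\phi'_E}=\arg\braket{\psi'_O|\phi'_O}$ of Lemma~\ref{th:necessary_condition} holds automatically. The norms balance to $\norm{\psi'_E}^2=\norm{\psi'_O}^2=\tfrac12$ (and likewise for $\phi$), whence via Eq.~\eqref{eq:notation} one gets $p'_i=p$, $q'_i=q$, $\Pr(E)=\Pr(O)=\tfrac12$, $\Xi(p'_i,q'_i)=\Xi(p,q)$, and $\abs{\braket{\tilde\psi'_E|\tilde\phi'_E}}=\abs{\braket{\tilde\psi'_O|\tilde\phi'_O}}=\abs{\braket{\psi|\phi}}$. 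Consequently the enlarged problem is ternary on both sectors exactly when $\abs{\braket{\psi|\phi}}\le\Xi(p,q)$, in which case Theorem~\ref{th:ternary} applies and gives $\Psucc^\SEPSet=1-2\sqrt{pq}\,\abs{\braket{\psi|\phi}}=\Psucc$; otherwise both sectors are binary and I check the condition of Theorem~\ref{th:binary_with_binary}. There every norm factor equals $1/\sqrt2$ and the two normalized overlaps coincide, so the left-hand side of Eq.~\eqref{eq:BBB_condition_pq} vanishes (respectively Eq.~\eqref{eq:BBB_condition_qp} reduces to an identity), while $p'_i-q'_i=p-q$ forces every $\Gamma_i^-$ (respectively $\Gamma_i^+$) to zero by Remark~\ref{rm:Gamma}; thus the optimality condition is satisfied. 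In either regime \ac{SEP}---hence \ac{LOCC}, by Theorem~\ref{th:local_discrimination}---attains the unconstrained optimum.

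For the converse I assume $\abs a\neq\abs b$ and exhibit a single pair that the ancilla cannot repair. Taking one mode per party, the even sector is spanned by $\ket{00}$ (the $E$ part) and $\ket{11}$ (the $O$ part); I choose $\ket\psi=\tfrac1{\sqrt2}(\ket{00}+\ket{11})$, $\ket\phi=\tfrac1{\sqrt2}(\ket{00}+i\ket{11})$ and $p=q=\tfrac12$, so that $\braket{\psi_E|\phi_E}=\tfrac12$ and $\braket{\psi_O|\phi_O}=\tfrac i2$. Substituting into the formulas of the first paragraph gives $\braket{\psi'_E|\phi'_E}=\tfrac12(\abs{a}^2+i\abs{b}^2)$ and $\braket{\psi'_O|\phi'_O}=\tfrac12(\abs{b}^2+i\abs{a}^2)$, both nonzero, with $\arg\braket{\psi'_E|\phi'_E}=\arctan(\abs{b}^2/\abs{a}^2)$ and $\arg\braket{\psi'_O|\phi'_O}=\arctan(\abs{a}^2/\abs{b}^2)$. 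These arguments coincide only when $\abs{a}^2=\abs{b}^2$, so for any non-maximally-entangled ancilla the necessary condition of Lemma~\ref{th:necessary_condition} fails and \ac{SEP} discrimination of this pair is strictly suboptimal, proving that maximal entanglement is indispensable.

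The step I expect to be the main obstacle is the binary branch of the sufficiency argument: one must push the symmetrized norms and overlaps through the rather opaque conditions of Theorem~\ref{th:binary_with_binary} and confirm that each $\Gamma_i^\pm$ drops out according to the sign of $p-q$; the ternary branch and the whole converse are comparatively immediate. A secondary point deserving explicit mention is the claim that appending a fixed, uncorrelated ancilla cannot raise the unconstrained optimum---this is what makes matching $\Psucc^\SEPSet$ to the original $\Psucc$ a genuine optimality statement rather than an artifact of enlarging the system.
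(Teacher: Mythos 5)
Your proof is correct and follows essentially the same route as the paper: the same sector decomposition of $\ket{\psi'},\ket{\phi'}$, Lemma~\ref{th:necessary_condition} for the ``only if'' direction, and the symmetrized norms, priors, and overlaps fed into Theorem~\ref{th:ternary} and Theorem~\ref{th:binary_with_binary} (with every $\Gamma_i^\pm$ vanishing because $p_i'-q_i'=p-q$ in both sectors) for sufficiency. The only cosmetic difference is in the converse: where the paper derives the general phase condition $(\abs{a}^4-\abs{b}^4)(e^{i\Delta}-e^{-i\Delta})=0$ and requires it to hold for every $\Delta$, you instantiate it with an explicit pair of states realizing $\Delta=\pi/2$, which is a concrete instance of the same argument.
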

	
	\begin{proof}
		We can write $\rho' = p\ket{\psi'}\bra{\psi'}$ and $\sigma' = q\ket{\phi'}\bra{\phi'}$ for
		\begin{align*}
			\ket{\psi'} &\coloneqq \ket\psi \otimes \ket\omega = \ket{\psi_E'} + \ket{\psi_O'} , \\
			\ket{\phi'} &\coloneqq \ket\phi \otimes \ket\omega = \ket{\phi_E'} + \ket{\phi_O'}
		\end{align*}
		with $\ket{\psi'_E} = a \ket{\psi_E 00} + b \ket{\psi_O 11}$, $\ket{\psi'_O} = b \ket{\psi_E 11} + a \ket{\psi_O 00}$, $\ket{\phi'_E} = a \ket{\phi_E 00} + b  \ket{\phi_O 11}$, and $\ket{\phi'_O} = b \ket{\phi_E 11} + a \ket{\phi_O 00}$.
		Hence, the scalar products of the $E$ and $O$ parts read
		\begin{align*}
			\braket{\psi_E' | \phi_E'} &= \abs{a}^2 \braket{\psi_E | \phi_E} + \abs{b}^2 \braket{\psi_O | \phi_O} , \\
			\braket{\psi_O' | \phi_O'} &= \abs{a}^2 \braket{\psi_O | \phi_O} + \abs{b}^2 \braket{\psi_E | \phi_E} .
		\end{align*}
		
		($\Rightarrow$) Lemma~\ref{th:necessary_condition} requires 
		Eq.~\eqref{eq:argument_condition} to be satisfied by the states for \ac{SEP} 
		discrimination to be optimal.
		Namely,
		\begin{equation*}
			\arg\braket{\psi_E' | \phi_E'} = \arg\braket{\psi_O' | \phi_O'} ,
		\end{equation*}
		which may be rewritten as
		\begin{equation}\label{eq:phase_condition}
			\left(\abs{a}^4 - \abs{b}^4\right) \left(e^{i \Delta} - e^{- i \Delta}\right) = 0
		\end{equation}
		where $\Delta = \arg\braket{\psi_E | \phi_E} - \arg\braket{\psi_O | \phi_O}$; see Ref.\footnote{We point out that we can assume $\braket{\psi_E' | \phi_E'} \ge 0$ and $\braket{\psi_O' | \phi_O'} = \abs{\braket{\psi_O' | \phi_O'}} e^{i\Delta}$, since we are dealing with an argument difference. Moreover, one could achieve Eq.~\eqref{eq:phase_condition} through \[ \arg(z + w) = \frac{1}{2i} \left[\ln(z + w) - \ln(\bar{z} + \bar{w})\right] . \]}.
		The above expression is satisfied by any $\Delta$ if and only if $\abs{a}^2 = \abs{b}^2 = 1/2$, i.e., for a maximally entangled ancilla.
		
		($\Leftarrow$) We now suppose $\abs{a}^2 = \abs{b}^2 = 1/2$ and prove sufficiency.
		The new states satisfy the following properties:
		\begin{enumerate}
			\item $\braket{\psi' | \phi'} = \braket{\psi | \phi}$, hence $\abs{\braket{\psi' | \phi'}} \le \Xi(p, q)$ if and only if $\abs{\braket{\psi | \phi}} \le \Xi(p, q)$.
			\item\label{itm:amplitudes} $\norm{\psi_E'} = \norm{\psi_O'} = \norm{\phi_E'} = \norm{\phi_O'} = 1/\sqrt{2}$.
			\item ${\Pr}'(E) = \Tr[(\rho' + \sigma') P_E] = 1 / 2 = {\Pr}'(O)$.
			\item\label{itm:probabilities}
			$p_E' = \Tr[\rho' P_E] / {\Pr}'(E) = p =
			p_O'$
			and $q_E' = q_O' = q$.  Thus,
			$\Xi(p, q) = \Xi(p_E', q_E') = \Xi(p_O',
			q_O')$.
			\item\label{itm:scalar_product}
			$\braket{\psi_E' | \phi_E'} = 1/2
			\left(\braket{\psi_E | \phi_E} +
			\braket{\psi_O | \phi_O} \right) =
			\braket{\psi_O' | \phi_O'}$.
		\end{enumerate}
		[see Eq.~\eqref{eq:Xi} for the definition of $\Xi$.]
		Thanks to properties~\ref{itm:amplitudes} and
		\ref{itm:scalar_product}, we have that
		\begin{equation*}
			\begin{aligned}
				\braket{\psi' | \phi'} &= \braket{\psi_E' | \phi_E'} + \braket{\psi_O' | \phi_O'} = 2 \braket{\psi_E' | \phi_E'} \\
				&= 2 \norm{\psi_E'} \norm{\phi_E'}
				\braket{\tilde{\psi}_E' | \tilde{\phi}_E'} =
				\braket{\tilde{\psi}_E' | \tilde{\phi}_E'} ,\\
				\braket{\psi' | \phi'} &= 2 \braket{\psi_O' | \phi_O'} = \braket{\tilde{\psi}_O' | \tilde{\phi}_O'} .
			\end{aligned}
		\end{equation*}
		The above results lead us to the fact that either
		$\abs{\braket{\psi | \phi}}$,
		$\abs{\braket{\psi' | \phi'}}$,
		$\abs{\braket{\tilde{\psi}_E' | \tilde{\phi}_E'}}$ and
		$\abs{\braket{\tilde{\psi}_O' | \tilde{\phi}_O'}}$ are
		all smaller than $\Xi(p, q)$ or they are all greater.
		We observe that if the unconstrained discrimination
		between the two original states $\rho$, $\sigma$ is
		ternary, then Theorem~\ref{th:ternary} ensures us that
		\ac{SEP} discrimination between $\rho'$ and
		$\sigma'$ is optimal.  Otherwise, if
		$\abs{\braket{\psi | \phi}} > \Xi(p, q)$, we have to
		further prove the validity of either
		Eq.~\eqref{eq:BBB_condition_pq} or
		\eqref{eq:BBB_condition_qp}.  However,
		property~\ref{itm:probabilities} tells us that either
		$p_E, p_E', p_O' > q_E, q_E', q_O'$ or vice versa.
		Therefore, the quantities $\Gamma^\pm_{i} = 0$ for both
		$i = E$ and $O$, as we have shown in
		Remark~\ref{rm:Gamma}.  Eventually, both equations
		\begin{align*}
			\norm{\psi_E'} \norm{\phi_O'} \abs{\braket{\tilde{\psi}_E' | \tilde{\phi}_E'}} &= \norm{\psi_O'} \norm{\phi_E'} \abs{\braket{\tilde{\psi}_O' | \tilde{\phi}_O'}}
			\intertext{and}
			\norm{\psi_O'} \norm{\phi_E'} \abs{\braket{\tilde{\psi}_E' | \tilde{\phi}_E'}}  &= \norm{\psi_E'} \norm{\phi_E'} \abs{\braket{\tilde{\psi}_O' | \tilde{\phi}_O'}}
		\end{align*}
		are satisfied by the states $\rho'$ and $\sigma'$, which proves sufficiency.
	\end{proof}
	
	\section{Discussion}
	
	As we have seen so far, the behavior of Fermionic unambiguous discrimination strategies genuinely differs from their quantum counterparts when we focus on their performances under locality restriction.
	Fermionic protocols optimally distinguish two states through \ac{LOCC} only if strict conditions on the preparations are fulfilled.
	However, we remark that the relationship between the \ac{LOCC} and \ac{SEP} classes still remains unchanged in terms of binary discrimination and, as we have seen in Lemma~\ref{th:local_discrimination}, the two classes achieve an identical maximum probability of success in unambiguous discrimination.
	
	We proved that the limits of Fermionic \ac{LOCC} discrimination are completely overcome if we take advantage of an ancillary system.
	A rather striking result, which echoes a similar constraint in the case of minimum-error discrimination~\cite{Lugli2020}, is that the ancilla is required to be maximally entangled in order to attain optimality for every pair of preparations.
	
	The comparison between the current results and those pertaining to the unambiguous discrimination of quantum states through \ac{LOCC} leaves room for some remarks.
	In Ref.~\cite{Ji2005}, the authors prove the existence of a particular basis for Alice that allows Bob to either perfectly or unambiguously distinguish between two local states, thus achieving the optimal unconstrained performance.
	However, in the Fermionic case, Theorems~\ref{th:ternary_lemma}, \ref{th:binary_with_ternary}, \ref{th:binary_with_binary} and Proposition~\ref{th:binary_with_null} show that such a basis does not exist in those cases where the conditions of the theorems are not met.
	Most notably, if we consider a pair of Fermionic states for which the \ac{LOCC} unambiguous discrimination is strictly suboptimal, the strategy of Ref.~\cite{Ji2005} would inevitably involve vectors for Alice or Bob that are forbidden by the parity superselection rule.
	
	In this paper, we obtained the condition for optimal unambiguous discrimination through \ac{LOCC} of Fermionic states.
	Our proof is based on the comparison between the success probability of the constrained and unconstrained strategies.
	We may wonder if another proof could be derived where the conditions on the states were expressed in a more algebraic form, as those developed for the conclusive case of Ref.~\cite{Lugli2020}.
	This is left as an open question for further development.
	
	A natural sequel of the present work would address the conditions for unambiguous discrimination through local resources in the case of \emph{mixed} Fermionic states.
	Nonetheless, such a task remains unresolved even in the quantum realm~\cite{Dusek2000,Chefles2004,Sugimoto2009,Chitambar2014b}, where strict conditions on the preparations have been derived only for particular circumstances.
	From full rigorous development, we would expect the conditions on the density matrices in the Fermionic case to be stronger than in the quantum one.
	Since we have already observed that some further requirements arise to unambiguously distinguish between two pure Fermionic states, we have evidence that mixed Fermionic preparations are subject to more stringent prerequisites.
	In Ref.~\cite{Chefles2004}, \citeauthor{Chefles2004} derives a necessary and sufficient condition for the feasibility of unambiguous discrimination of mixed quantum states through \ac{LOCC}, i.e., for a non-null success probability.
	However, the superselection rule cannot but increase the chances of having states not unambiguously distinguishable to any extent.
	
	\begin{acknowledgments}
		A.~T. acknowledges financial support from the Elvia and Federico Faggin Foundation through the Silicon Valley Community Foundation, Grant No.~2020-214365.
		This work was supported by MIUR Dipartimenti di Eccellenza 2018-2022 project F11I18000680001.
	\end{acknowledgments}
	
	\appendix
	
	\section{General condition for \acs{SEP} discrimination}\label{app:general_condition}
	In Theorems~\ref{th:binary_with_ternary}, \ref{th:binary_with_binary} and in Proposition~\ref{th:binary_with_null}, we derive the necessary and sufficient conditions for \ac{SEP} unambiguous discriminations by taking advantage of Theorem~\ref{th:local_discrimination}.
	Indeed, the latter states that the optimal \ac{SEP} discrimination protocol is \ac{LOCC} implementable with success probability
	\begin{equation*}
		\Psucc^\SEPSet = \Pr(E) \Psucc (\rho_E, \sigma_E) + \Pr(O) \Psucc (\rho_O, \sigma_O) . \tag{\ref{eq:SEP_success}}
	\end{equation*}
	We then compare the success probability of the \ac{SEP} protocol given by Eq.~\eqref{eq:SEP_success} with the unconstrained one of Eq.~\eqref{eq:psucc_binary}.
	Moreover, since we know from Lemma~\ref{th:necessary_condition} that we can optimally distinguish between two states only if
	\begin{equation*}
		\arg \braket{\psi_E | \phi_E} = \arg \braket{\psi_O | \phi_O} ,
	\end{equation*}
	in the proofs of the aforementioned results we take for granted that such a condition is satisfied.
	As a result, we have that the \ac{SEP} discrimination is optimal if and only if
	\begin{multline}\label{eq:general_condition}
		a \X{E}^2 + b \X{O}^2 + c \X{E} + d \X{O} \\
		+ f \X{E} \cdot \X{O} + g = 0 ,
	\end{multline}
	where the parameters $a$, $b$, $c$, $d$, $f$, and $g$ depend on the case under scrutiny.
	In the most general case, Eq.~\eqref{eq:general_condition} has the following solutions:
	\begin{widetext}\begin{align}
			- 2 a \X{E} &= f \X{O} + c \pm \sqrt{(f^2 - 4 a b) \X{O}^2 + (2 c f - 4 a d) \X{O} + c^2 - 4 a g} & \text{for}\ a &\neq 0 \label{eq:general_solution_E} , \\
			- 2 b \X{O} &= f \X{E} + d \pm \sqrt{(f^2 - 4 a b) \X{E}^2 + (2 d f - 4 b c) \X{E} + d^2 - 4 b g} & \text{for}\ b &\neq 0 . \label{eq:general_solution_O}
	\end{align}\end{widetext}
	In Theorems~\ref{th:binary_with_ternary}, \ref{th:binary_with_binary} and in Proposition~\ref{th:binary_with_null}, we take advantage of the additional hypothesis on the states to further simplify the above expressions.
	The choice between Eqs.~\eqref{eq:general_solution_E} and \eqref{eq:general_solution_O} is only based on a simplification and convenience criterion in the analysis of the discrimination cases.
	
	\section{Derivation of quantity $\Lambda_i$}\label{app:Lambda_X}
	In Theorem~\ref{th:binary_with_ternary}, we covered the case where the \ac{SEP} discrimination is binary on either the $E$ or $O$ sector and ternary in the other.
	In particular, in the proof of the theorem, we assumed $\X{E} \le \Xi(p_E, q_E)$ and $\X{O} > \Xi(p_O, q_O)$, so that the necessary and sufficient condition for optimal \ac{SEP} discrimination reads
	\begin{multline*}
		2 a \X{E} = - f \X{O} - c \\
		\pm \sqrt{(f^2 - 4 ab) \X{O}^2 + 2 c f \X{O} + c^2 - 4 a g } .
	\end{multline*}
	The values for parameters $a$, $b$, $c$, $f$, and $g$ are expressed in Eq.~\eqref{eq:BTB_params} so that
	\begin{multline*}
		f^2 - 4 a b = 4 {\max}^2\{p, q\} \norm{\psi_E}^2 \norm{\phi_E}^2 \norm{\psi_O}^2 \norm{\phi_O}^2 \\
		+ 4 \max\{p, q\} \norm{\psi_E}^2 \norm{\phi_E}^2 \left(\max\{p \norm{\psi_O}^2, q \norm{\phi_O}^2\} \right. \\
		\left. - \max\{p, q\} \norm{\psi_O}^2 \norm{\phi_O}^2\right) .
	\end{multline*}
	Since we assumed $p_E, q_E \neq 0$, i.e., $\norm{\psi_E}, \norm{\phi_E} \neq 0$, we may reformulate the above expression as
	\begin{multline*}
		\norm{\psi_E} \norm{\phi_E} \X{E}
		= - \norm{\psi_O} \norm{\phi_O} \X{O} \\
		+ \sqrt{\frac{p q}{{\max}^2\{p, q\}}} \pm \sqrt{\frac{\Theta}{\max\{p, q\}}} ,
	\end{multline*}
	for
	\begin{equation*}
		\Theta = \alpha \X{O}^2 + \beta \X{O} + \gamma ,
	\end{equation*}
	where we define
	\begin{align*}
		\alpha &= \max\{p \norm{\psi_O}^2, q \norm{\phi_O}^2\} \\
		\beta &= - 2 \sqrt{p q} \norm{\psi_O} \norm{\phi_O} \\
		\begin{split}
			\gamma &= \max\{p, q\} - \Pr(E) - \max\{p \norm{\psi_O}^2, q \norm{\phi_O}^2\} \\
			&\qquad + \frac{pq}{\max\{p, q\}} .
		\end{split}
	\end{align*}
	We prove with ease that
	\[ \frac{p q}{\max\{p, q\}} = \min\{p, q\} , \]
	so that we find
	\begin{multline*}
		\norm{\psi_E} \norm{\phi_E} \X{E} = - \norm{\psi_O} \norm{\phi_O} \X{O} + \Xi(p, q) \\
		\pm \sqrt{\frac{\Theta}{\max\{p, q\}}},
	\end{multline*}
	and
	\begin{align*}
		\gamma &= \max\{p, q\} + \min\{p, q\} - \Pr(E) \\
		&\qquad - \max(p \norm{\psi_O}^2, q \norm{\phi_O}^2) \\
		&= \Pr(O) - \max(p \norm{\psi_O}^2, q \norm{\phi_O}^2) \\
		&= \min(p \norm{\psi_O}^2, q \norm{\phi_O}^2) .
	\end{align*}
	Now we can rewrite the above condition as
	\begin{align*}
		\Y E+\Y O-\Xi(p,q)=\pm\sqrt{\frac\Theta{\max\{p,q\}}},
	\end{align*}
	and remembering that by the hypotheses of Theorem~\ref{th:binary_with_ternary},f one has $\Y E+\Y O=|\braket{\phi|\psi}|>\Xi(p,q)$, we obtain
	\begin{align*}
		|\braket{\psi|\phi}|-\Xi(p,q)=\sqrt{\frac\Theta{\max\{p,q\}}}.
	\end{align*}
	The term $\Theta$ eventually reads
	\begin{align*}
		\Theta &= \left(\sqrt{\max(p \norm{\psi_O}^2, q \norm{\phi_E}^2)} \X{O} \right. \\
		&\left. - \sqrt{\min(p \norm{\psi_O}^2, q \norm{\phi_O}^2)}\right)^2 \\
		&=\sqrt{\max\{p\|\psi_O\|^2,q\|\phi_O\|^2\}}(\X O-\Xi(p_O,q_O)),
	\end{align*}
	and we finally achieve Eq.~\eqref{eq:B_BT_condition}.
	
	\bibliography{bibliography}
\end{document}